\newcommand{\R}{\mathbb{R}}
\newcommand{\eps}{\varepsilon}
\newcommand{\defeq}{:=}
\newcommand{\E}{\mathbb{E}}
\renewcommand{\P}{\mathbb{P}}
\renewcommand{\epsilon}{\varepsilon}
\DeclareMathOperator*{\argmax}{argmax}
\newtheorem{theorem}{Theorem}
\newtheorem{proposition}[theorem]{Proposition}
\newtheorem{claim}[theorem]{Claim}
\newtheorem{lemma}[theorem]{Lemma}
\newtheorem{definition}[theorem]{Definition}
\theoremstyle{remark}
\newcommand{\ignore}[1]{}%
\title{Maximization of \\Approximately Submodular Functions}
\author{
    Thibaut Horel\\
    Harvard University\\
    \texttt{thorel@seas.harvard.edu}
    \And
    Yaron Singer\\
    Harvard University\\
    \texttt{yaron@seas.harvard.edu}
}
\begin{document}
\maketitle

\begin{abstract}
We study the problem of maximizing a function that is \emph{approximately} submodular under a cardinality constraint. Approximate submodularity implicitly appears in a wide range of applications as in many cases errors in evaluation of a submodular function break submodularity. Say that $F$ is $\eps$-approximately submodular if there exists a submodular function $f$ such that $(1-\eps)f(S) \leq F(S)\leq (1+\eps)f(S)$ for all subsets $S$. We are interested in characterizing the query-complexity of maximizing $F$ subject to a cardinality constraint $k$ as a function of the \emph{error level} $\eps>0$.  We provide both lower and upper bounds: for $\eps>n^{-1/2}$ we show an exponential query-complexity lower bound.  In contrast, when $\eps< {1}/{k}$ or under a stronger \emph{bounded curvature} assumption, we give constant approximation algorithms.
\end{abstract}

\section{Introduction}\label{sec:intro}

In recent years, there has been a surge of interest in machine learning methods
that involve discrete optimization.  In this realm, the evolving theory of
\emph{submodular optimization} has been a catalyst for progress in
extraordinarily varied application areas.  Examples include active learning and
experimental design~\cite{golkr11,guibi11,hjzl06,krgu07,krgue11acm}, sparse
reconstruction~\cite{ba10,das12,dake11}, graph inference~\cite{gor11,gor12,deca12},
video analysis~\cite{zheng14}, clustering \cite{gomes10}, document
summarization~\cite{libi11}, object detection~\cite{song14},
information retrieval~\cite{tschia14}, network
inference~\cite{gor11,gor12}, and information diffusion in
networks~\cite{kkt03}.





The power of submodularity as a modeling tool lies in its ability to capture interesting application domains while maintaining provable guarantees for optimization.  The guarantees however, apply to the case in which one has access to the exact function to optimize.  In many applications, one does not have access to the exact version of the function, but rather some approximate version of it.  If the approximate version remains submodular then the theory of submodular optimization clearly applies and modest errors translate to modest loss in quality of approximation.  But if the approximate version of the function ceases to be submodular all bets are off.  

%
%


\paragraph{Approximate submodularity.} Recall that a function $f:2^N\to\R$ is
\emph{submodular} if for all $S, T\subseteq N$, $f(S\cup T) + f(S\cap T)\leq
f(S)+f(T)$. We say that a function $F:2^N\to \mathbb{R}$ is
$\epsilon$-\emph{approximately submodular} if there exists a submodular
function $f:2^N\to \R$ s.t. for any $S \subseteq N$:
\begin{equation}
    \label{eq:def}
(1-\epsilon)f(S) \leq F(S) \leq (1+\epsilon)f(S).
\end{equation}
Unless otherwise stated, all submodular functions $f$ considered are
normalized ($f(\emptyset)=0$) and \emph{monotone} ($f(S) \leq f(T)$ for
$S\subseteq T$). Approximate submodularity appears in various domains. 

\begin{itemize}
\item \textbf{Optimization with noisy oracles.} In these scenarios, we wish to
    solve optimization problems where one does not have access to a submodular
        function but a noisy version of it. An example recently studied
        in~\cite{chen2015} involves maximizing information gain in graphical
        models; this captures many Bayesian experimental design settings.
\item \textbf{\textbf{PMAC} learning.}  In the active area of learning
    submodular functions initiated by Balcan and Harvey~\cite{bh11}, the
        objective is to \emph{approximately} learn submodular functions.
        Roughly speaking, the PMAC-learning framework guarantees that the
        learned function is a constant-factor approximation of the true
        submodular function with high probability. Therefore, after learning
        a submodular function, one obtains an approximately submodular
        function.
\item \textbf{Sketching.} Since submodular functions have, in general,
    exponential-size representation, \cite{sketching} studied the problem of
        \emph{sketching} submodular functions: finding a function with
        polynomial-size representation approximating a given submodular
        function. The resulting sketch is an approximately submodular function.
\end{itemize} 

\paragraph{Optimization of approximate submodularity.} We focus on optimization problems of the form
\begin{equation}
    \label{eq:prob}
    \max_{S\,:\,|S|\leq k} F(S)
\end{equation}
where $F$ is an $\eps$-approximately submodular function and $k\in\mathbb{N}$
is the cardinality constraint. We say that a set $S\subseteq N$ is an
$\alpha$-approximation to the optimal solution of \eqref{eq:prob} if $|S|\leq
k$ and $F(S)\geq \alpha \max_{|T|\leq k} F(T)$.
As is common in submodular optimization, we assume the \emph{value query
model:} optimization algorithms have access to the objective function $F$ in
a black-box manner, \emph{i.e.} they make queries to an oracle which returns,
for a queried set $S$, the value $F(S)$. The query-complexity of the algorithm
is simply the number of queries made to the oracle. An algorithm is called an
$\alpha$-approximation algorithm if for any approximately submodular input $F$
the solution returned by the algorithm is an $\alpha$-approximately optimal
solution. Note that if there exists an $\alpha$-approximation algorithm for the
problem of maximizing an $\epsilon$-approximate submodular function $F$, then
this algorithm is a $\frac{\alpha(1-\epsilon)}{1+\eps}$-approximation algorithm
for the original submodular function $f$\footnote{Observe that for an
approximately submodular function $F$, there exists many submodular functions
$f$ of which it is an approximation. All such submodular functions $f$ are
called \emph{representatives} of $F$. The conversion between an approximation
guarantee for $F$ and an approximation guarantee for a representative $f$ of
$F$ holds for any choice of the representative.}. Conversely, if no such
algorithm exists, this implies an inapproximability for the original
function.

Clearly, if a function is $0$-approximately submodular then it retains
desirable provable guarantees\footnote{Specifically, \cite{nwf} shows that it
possible to obtain a $(1-1/e)$ approximation ratio for a cardinality
constraint.}, and it if it is arbitrarily far from being submodular it can be
shown to be trivially inapproximable (\emph{e.g}. maximize a function which takes
value of $1$ for a single arbitrary set $S\subseteq N$ and 0 elsewhere).
The question is therefore: 

\begin{center}
\emph{How close should a function be to submodular to retain provable
    approximation guarantees? }
\end{center}

In recent work, it was shown that for any constant $\epsilon>0$ there exists
a class of $\eps$-approximately submodular functions for which no algorithm
using fewer than exponentially-many queries has a constant approximation ratio
for the canonical problem of maximizing a monotone submodular function under
a cardinality constraint~\cite{hs16}.  Such an impossibility result suggests
two natural relaxations: the first is to make additional assumptions about the
structure of errors, such a stochastic error model.  This is the direction
taken in~\cite{hs16}, where the main result shows that when errors are drawn
i.i.d. from a wide class of distributions, optimal guarantees are obtainable.
The second alternative is to assume the error is subconstant, which is the
focus of this paper.

\subsection{Overview of the results}

Our main result is a spoiler: even for $\epsilon = 1/n^{1/2-\beta}$ for any
constant $\beta>0$ and $n=|N|$, no algorithm can obtain a constant-factor
approximation guarantee.  More specifically, we show that: 

\begin{itemize}
\item For the general case of \textbf{monotone submodular functions}, for any $\beta>0$, given access to a
        $\frac{1}{n^{1/2-\beta}}$-approximately submodular function, no algorithm
        can obtain an approximation ratio better than $O(1/n^\beta)$ using
        polynomially many queries (Theorem~\ref{thm:msf});
\item For the case of \textbf{coverage functions} we show that for any fixed
    $\beta>0$ given access to an $\frac{1}{n^{1/3-\beta}}$-approximately
        submodular function, no algorithm can obtain an approximation ratio strictly
        better than $O(1/n^\beta)$ using polynomially many queries
        (Theorem~\ref{thm:coverage}).
\end{itemize}

The above results imply that even in cases where the objective function is
arbitrarily close to being submodular as the number $n$ of elements in $N$
grows, reasonable optimization guarantees are unachievable. The second result
shows that this is the case even when we aim to optimize \emph{coverage
functions}.  Coverage functions are an important class of submodular functions
which are used in numerous applications~\cite{sensor,libi11,observation}.
  
\paragraph{Approximation guarantees.}  The inapproximability results follow
from two properties of the model: the structure of the function
(submodularity), and the size of $\eps$ in the definition of approximate
submodularity. A natural question is whether one can relax either conditions to
obtain positive approximation guarantees. We show that this is indeed the case:
\begin{itemize}
    \item In the general case of \textbf{monotone submodular functions} we show
        that the greedy algorithm achieves a $\big(1-1/e -O(\delta)\big)$
        approximation ratio when $\eps = \frac{\delta}{k}$
        (Theorem~\ref{thm:greedy}).  Furthermore, this bound is
        tight: given a $1/k^{1-\beta}$-approximately submodular function, the
        greedy algorithm no longer provides a constant factor approximation
        guarantee (Proposition~\ref{prop:greedy}).
    \item Since our query-complexity lower bound holds for coverage functions,
        which already contain a great deal of structure, we relax the
        structural assumption by considering functions with \textbf{bounded
        curvature} $c$; this is a common assumption in applications of
        submodularity to machine learning and has been used in prior work to
        obtain theoretical guarantees~\cite{sosc,curvature}. Under this
        assumption, we give an algorithm which achieves an approximation ratio of
        $(1-c)(\frac{1-\eps}{1+\eps})^2$ (Proposition~\ref{prop:curvature}).
\end{itemize}
We state our positive results for the case of a cardinality constraint of $k$.
Similar results hold for matroids of rank $k$, the proofs of those can be found
in the Appendix.  Note that cardinality constraints are a special case of
matroid constraints, therefore our lower bounds also apply to matroid
constraints.

\subsection{Discussion and additional related work}

Before transitioning to the technical results, we briefly survey error in applications of submodularity and the implications of our results to these applications.  
First, notice that there is a coupling between approximate submodularity and erroneous evaluations of a submodular
function: if one can evaluate a submodular function within (multiplicative)
accuracy of $1\pm\epsilon$ then this is an $\epsilon$-\emph{approximately
submodular} function. 

\paragraph{Additive vs multiplicative approximation.}  The definition of
approximate submodularity in~\eqref{eq:def} uses relative (multiplicative)
approximation. We could instead consider absolute (additive) approximation,
\emph{i.e.} require that $f(S)-\eps \leq F(S)\leq f(S)+\eps$ for all sets $S$.
This definition has been used in the related problem of optimizing
approximately convex functions~\cite{belloni,singerinfo}, where functions are
assumed to have normalized range. For un-normalized functions or functions
whose range is unknown, a relative approximation is more informative. When the
range is known, specifically if an upper bound $B$ on $f(S)$ is known, an
$\eps/B$-approximately submodular function is also an $\eps$-additively
approximate submodular function. This implies that our lower bounds and
approximation results could equivalently be expressed for additive
approximations of normalized functions.

\paragraph{Error vs noise.} If we interpret Equation~\eqref{eq:def} in terms of
error, we see that no assumption is made on the source of the error yielding
the approximately submodular function. In particular, there is no stochastic
assumption: the error is deterministic and worst-case. Previous work have
considered submodular or combinatorial optimization under random noise. Two
models naturally arise:
\begin{itemize}
    \item \emph{consistent noise:} the approximate function $F$ is such that
        $F(S) = \xi_S f(S)$ where $\xi_S$ is drawn independently for
        each set $S$ from a distribution $\mathcal{D}$. The key aspect of
        consistent noise is that the random draws occur only once: querying the
        same set multiple times always returns the same value. This definition
        is the one adopted in \cite{hs16}; a similar notion is called
        \emph{persistent noise} in \cite{chen2015}.
    \item \emph{inconsistent noise:} in this model $F(S)$ is a random variable
        such that $f(S) = \E[F(S)]$. The noisy oracle can be queried multiple
        times and each query corresponds to a new independent random draw from
        the distribution of $F(S)$. This model was considered in
        \cite{singla2015} in the context of dataset summarization and is also
        implicitly present in \cite{kkt03} where the objective function is
        defined as an expectation and has to be estimated via sampling.
\end{itemize}

Formal guarantees for consistent noise have been obtained in \cite{hs16}.
A standard way to approach optimization with inconsistent noise is to estimate
the value of each set used by the algorithm to an accuracy $\eps$ via
independent randomized sampling, where $\eps$ is chosen small enough so as to
obtain approximation guarantees. Specifically, assuming that the algorithm only
makes polynomially many value queries and that the function $f$ is such that
$F(S)\in[b, B]$ for  any set $S$, then a classical application of the Chernoff
bound combined with a union bound implies that if the value of each set is
estimated by averaging the value of $m$ samples with
$ m = \Omega\left(\frac{B\log n}{b\eps^2}\right) $,
then with high probability the estimated value $F(S)$ of each set used
by the algorithm is such that $(1-\eps)f(S)\leq F(S)\leq (1+\eps)f(S)$.
In other words, \emph{randomized sampling is used to construct a function which
is $\eps$-approximately submodular with high probability.}

\paragraph{Implications of results in this paper.}  Given the above discussion,
our results can be interpreted in the context of noise as providing guarantees
on what is a tolerable noise level. In particular, Theorem~\ref{thm:greedy}
implies that if a submodular function is estimated using $m$ samples, with
$ m = \Omega\left(\frac{Bn^2\log n}{b}\right) $, then the Greedy algorithm is
a constant approximation algorithm for the problem of maximizing a monotone
submodular function under a cardinality constraint. Theorem~\ref{thm:msf}
implies that if $ m = O\left(\frac{Bn\log n}{b}\right) $ then the resulting
estimation error is within the range where no algorithm can obtain a constant
approximation ratio.

\section {Query-complexity lower bounds}
\label{sec:lower}

In this section we give query-complexity lower bounds for the problem of
maximizing an $\eps$-approximately submodular function subject to a cardinality
constraint.  In Section~\ref{sec:msf}, we show an exponential query-complexity
lower bound for the case of general submodular functions when $\eps\geq
n^{-1/2}$ (Theorem~\ref{thm:msf}). The same lower-bound is then shown to hold
even when we restrict ourselves to the case of coverage functions for
$\eps\geq n^{-1/3}$ (Theorem~\ref{thm:coverage}).

\paragraph{A general overview of query-complexity lower bounds.} At a high
level, the lower bounds are constructed as follows.  We define a class of
monotone submodular functions $\mathcal{F}$, and draw a function $f$ uniformly
at random from $\mathcal{F}$. In addition we define a submodular function
$g:2^{N} \to \R$ s.t. $\max_{|S|\leq k}f(S)\geq \rho(n)\cdot \max_{|S|\leq
k}g(S)$, where $\rho(n)=o(1)$ for a particular choice of $k < n$. We then
define the approximately submodular function $F$:
$$F(S) = \begin{cases}
g(S), & \textrm{if $(1-\epsilon)f(S) \leq g(S) \leq (1+\epsilon)f(S)$}\\
f(S) & \textrm{otherwise}\\
\end{cases} $$
Note that by its definition, this function is an $\eps$-approximately
submodular function.  To show the lower bound, we reduce the problem of proving
inapproximability of optimizing an approximately submodular function to the
problem of distinguishing between $f$ and $g$ using $F$.  We show that for
every algorithm, there exists a function $f \in \mathcal{F}$ s.t. if $f$ is
unknown to the algorithm, it cannot distinguish between the case in which the
underlying function is $f$ and the case in which the underlying function is $g$
using polynomially-many value queries to $F$, even when $g$ is known to the
algorithm.  Since $\max_{|S|\leq k}f(S)\geq \rho(n)\max_{|S|\leq k}g(S)$, this
implies that no algorithm can obtain an approximation better than $\rho(n)$
using polynomially-many queries; otherwise such an algorithm could be used to
distinguish between $f$ and $g$.


\subsection{Monotone submodular functions}\label{sec:msf}

\paragraph{Constructing a class of hard functions.} A natural candidate for
a class of functions $\mathcal{F}$ and a function $g$ satisfying the properties
described in the overview is:
\begin{displaymath}
    f^H(S) = |S\cap H|
    \quad\text{and}\quad
    g(S) = \frac{|S|h}{n}
\end{displaymath}
for $H\subseteq N$ of size $h$.  The reason why $g$ is hard to distinguish from
$f^H$ is that when $H$ is drawn uniformly at random among sets of size $h$,
$f^H$ is close to $g$ with high probability. This follows from an application
of the Chernoff bound for negatively associated random variables. Formally,
this is stated in Lemma~\ref{lemma:chernoff} whose proof is given in the
Appendix.
\begin{lemma}
    \label{lemma:chernoff}
    Let $H\subseteq N$ be a set drawn uniformly among sets of size $h$, then
    for any $S\subseteq N$, writing $\mu = \frac{|S|h}{n}$, for any $\eps$ such
    that $\eps^2\mu>1$:
    \begin{displaymath}
        \P_H\big[(1-\eps) \mu \leq |S\cap H| \leq (1+\eps)\mu\big]
        \geq 1-2^{-\Omega(\eps^2\mu)}
    \end{displaymath}
\end{lemma}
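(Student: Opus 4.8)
The plan is to write $|S\cap H|=\sum_{i\in S}Y_i$ with $Y_i=\mathbf{1}[i\in H]$ and to apply a multiplicative Chernoff--Hoeffding bound to this sum. First I would observe that, since $H$ is uniform among size-$h$ subsets of $N$, each $Y_i$ is Bernoulli with $\P[Y_i=1]=\binom{n-1}{h-1}/\binom{n}{h}=h/n$, so $\E_H[|S\cap H|]=|S|\cdot h/n=\mu$; thus $[(1-\eps)\mu,(1+\eps)\mu]$ is exactly a multiplicative neighborhood of the mean and the statement is a two-sided concentration bound. The subtlety is that the $Y_i$ are \emph{not} independent (their sum over all of $N$ is pinned to $h$), so a plain Chernoff bound does not apply directly; instead I would use that they are \emph{negatively associated}.

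The key structural fact is that the indicator vector $(Y_i)_{i\in N}$ of a uniformly random subset of fixed size $h$ is negatively associated: $(Y_i)_{i\in N}$ is a uniformly random permutation of the fixed $0/1$ vector with $h$ ones, and permutation distributions of a fixed vector are negatively associated (Joag-Dev and Proschan; see also Dubhashi and Ranjan). Since any subfamily of a negatively associated family is itself negatively associated, $(Y_i)_{i\in S}$ is negatively associated.

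Negative association gives $\E_H\big[e^{t\sum_{i\in S}Y_i}\big]\le\prod_{i\in S}\E_H[e^{tY_i}]$ for every $t\in\R$, which is the only consequence of independence used in the textbook derivation of the multiplicative Chernoff bound. Hence the usual bounds hold verbatim: for $\eps\in(0,1]$, $\P_H[|S\cap H|\ge(1+\eps)\mu]\le e^{-\eps^2\mu/3}$ and $\P_H[|S\cap H|\le(1-\eps)\mu]\le e^{-\eps^2\mu/2}$. (For $\eps>1$ the lower-tail event is impossible since $(1-\eps)\mu<0$, and the upper tail gives an even smaller bound $e^{-\Omega(\eps\mu)}$; in any case the conclusion is only meaningful once $\eps^2\mu$ exceeds an absolute constant.) A union bound yields $\P_H\big[|S\cap H|\notin[(1-\eps)\mu,(1+\eps)\mu]\big]\le 2e^{-\eps^2\mu/3}$, and using the hypothesis $\eps^2\mu>1$ to absorb the constant factor into the exponent this is of the form $2^{-\Omega(\eps^2\mu)}$, as claimed.

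The only non-routine ingredient is the negative association of the fixed-size-subset indicators, together with the fact that Chernoff-type tail bounds survive under negative association; both are classical but should be cited precisely. Everything else is the standard computation, so I expect this to be the main — and essentially the only — obstacle.
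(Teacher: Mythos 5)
Your proof is correct and follows essentially the same route as the paper's: both write $|S\cap H|=\sum_{i\in S}X_i$ for the indicators of a uniformly random size-$h$ subset, invoke negative association of these indicators, and apply the Chernoff bound for negatively associated variables followed by a union bound. Your write-up is if anything slightly more careful (making the mean computation and the role of $\eps^2\mu>1$ explicit), but there is no substantive difference in approach.
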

Unfortunately this construction fails if the algorithm is allowed to evaluate
the approximately submodular function at small sets: for those the
concentration of Lemma~\ref{lemma:chernoff} is not high enough. Our
construction instead relies on designing $\mathcal{F}$ and $g$ such that when
$S$ is ``large'', then we can make use of the concentration result of
Lemma~\ref{lemma:chernoff} and when $S$ is ``small'', functions in
$\mathcal{F}$ and $g$ are deterministically close to each other. Specifically,
we introduce for $H\subseteq N$ of size $h$:
\begin{equation}
    \label{eq:cons}
    \begin{split}
        f^H(S) &= |S\cap H| + \min\left(|S\cap (N\setminus H)|, \alpha\left(1-\frac{h}{n}\right)\right)\\
        g(S) &= \min\left(|S|, \frac{|S|h}{n} + \alpha\left(1-\frac{h}{n}\right)\right)
    \end{split}
\end{equation}
The value of the parameters $\alpha$  and $h$ will be set later in the
analysis. Observe that when $S$ is small ($|S\cap \bar{H}|\leq \alpha(1-h/n)$
and $|S|\leq \alpha$) then $f^H(S) = g(S) = |S|$. When $S$ is large,
Lemma~\ref{lemma:chernoff} implies that $|S\cap H|$ is close to $|S|h/n$ and
$|S\cap(N\setminus H)|$ is close to $|S|(1-h/n)$ with high probability.

    First note that $f^H$ and $g$ are monotone submodular functions. $f^H$ is
    the sum of a monotone additive function and a monotone budget-additive
    function. The function $g$ can be written $g(S) = G(|S|)$ where $G(x)
    = \min(x, xh/n + \alpha(1-h/n))$. $G$ is a non-decreasing concave function
    (minimum between two non-decreasing linear functions) hence $g$ is monotone
    submodular.
    
    Next, we observe that there is a gap between the maxima of the functions
    $f^H$ and the one of $g$. When $S\leq k$, $g(S) = \frac{|S|h}{n}
    + \alpha\left(1-\frac{h}{n}\right)$. The maximum is clearly attained when
    $|S|=k$ and is upper-bounded by $\frac{kh}{n} +\alpha$. For $f^H$, the
    maximum is equal to $k$ and is attained when $S$ is a subset of $H$ of size
    $k$. So for $\alpha\leq k\leq h$, we obtain:
    \begin{equation}
        \label{eq:gap}
        \max_{|S|\leq k} g(S) \leq \left(\frac{\alpha}{k}
        + \frac{h}{n}\right)\max_{|S|\leq k} f^H(S),
        \quad H\subseteq N
    \end{equation}

\paragraph{Indistinguishability.} The main challenge is now to prove
that $f^H$ is close to $g$ with high probability. Formally, we have the
following lemma.
    
    \begin{lemma}
    For $h\leq \frac{n}{2}$, let $H$ be drawn uniformly
    at random among sets of size $h$, then for any $S$:
    \begin{equation}
        \label{eq:concentration}
        \P_H\big[(1-\eps) f^H(S) \leq g(S) \leq (1+\eps)f^H(S)\big]
        \geq 1-2^{-\Omega(\eps^2\alpha h/n)}
    \end{equation}
    \end{lemma}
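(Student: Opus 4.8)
The plan is to reduce the claim to a one-dimensional concentration statement for $X \defeq |S\cap H|$. Fix $S$, write $s = |S|$, let $\mu \defeq \E_H[X] = sh/n$ (so $X$ is hypergeometric), and set $\beta \defeq \alpha(1-h/n)$. The starting point is the algebraic identity $a + \min(c,\beta) = \min(a+c,\,a+\beta)$, applied with $a = X$ and $c = |S\cap(N\setminus H)| = s - X$, which lets us rewrite both functions from \eqref{eq:cons} as truncations of one affine map:
\[
  f^H(S) = \min\big(s,\; X+\beta\big), \qquad g(S) = \min\big(s,\; \mu+\beta\big).
\]
Now $\phi(t) \defeq \min(s,\, t+\beta)$ is nondecreasing and $1$-Lipschitz, so $|f^H(S)-g(S)| = |\phi(X)-\phi(\mu)| \le |X-\mu|$; moreover $\mu \le g(S)$ always, since $\mu = sh/n \le s$ and $\mu \le \mu+\beta$. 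Hence on the event $E \defeq \{\,|X-\mu| \le \eps\mu/2\,\}$ we get $|f^H(S)-g(S)| \le \eps\mu/2 \le \eps\, g(S)/2$, i.e. $(1-\eps/2)\,g(S) \le f^H(S) \le (1+\eps/2)\,g(S)$, and therefore (using $\eps\le1$) $(1-\eps)f^H(S) \le (1-\eps)(1+\eps/2)\,g(S) \le g(S)$ and $g(S) \le f^H(S)/(1-\eps/2) \le (1+\eps)f^H(S)$. So it remains only to show $\P_H[E] \ge 1 - 2^{-\Omega(\eps^2\alpha h/n)}$.

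To control $\P_H[E]$ I would split on the size of $S$. If $s \le \beta$ then $s \le \alpha$ and $X+\beta \ge \beta \ge s$, so $f^H(S) = g(S) = s$ deterministically and \eqref{eq:concentration} is immediate. If $s > \beta$, then $\mu = sh/n > \beta h/n \ge \alpha h/(2n)$, where I used the hypothesis $h \le n/2$ to get $\beta = \alpha(1-h/n) \ge \alpha/2$. Applying Lemma~\ref{lemma:chernoff} with $\eps$ replaced by $\eps/2$ then yields $\P_H[E] \ge 1 - 2^{-\Omega(\eps^2\mu)} \ge 1 - 2^{-\Omega(\eps^2\alpha h/n)}$, which is exactly \eqref{eq:concentration}. (When $\eps^2\alpha h/n$ is too small for the hypothesis $\eps^2\mu>1$ of Lemma~\ref{lemma:chernoff} to apply, the right-hand side of \eqref{eq:concentration} is already bounded away from $1$, so the inequality holds trivially after fixing the constant hidden in $\Omega(\cdot)$ small enough.)

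The step I expect to be the main obstacle is the conversion from the additive concentration $|X-\mu|=O(\eps\mu)$ to a \emph{multiplicative} two-sided bound with the precise constants $1\pm\eps$: running Chernoff at deviation $\eps$ would only give $g(S)\le f^H(S)/(1-\eps)$, which is strictly weaker than the required $g(S)\le(1+\eps)f^H(S)$. The fix --- invoking Lemma~\ref{lemma:chernoff} at the smaller deviation $\eps/2$, which changes nothing in the exponent, and using $1/(1-\eps/2)\le 1+\eps$ --- is the only subtle point in the constant-chasing. The secondary point to get right is the small-$S$ regime, where Lemma~\ref{lemma:chernoff} is vacuous: this is exactly why the definitions \eqref{eq:cons} are truncated at $\beta$, and the threshold $s=\beta$ is calibrated so that for every larger set $\mu$ is already $\Omega(\alpha h/n)$, making the tail probability beat $2^{-\Omega(\eps^2\alpha h/n)}$, while for smaller sets $f^H$ and $g$ coincide exactly.
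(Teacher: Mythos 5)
Your proof is correct, but it takes a genuinely different route from the paper's. The paper splits into four cases according to whether $|S|\lessgtr\alpha$ and $|S\cap\bar H|\lessgtr\alpha(1-h/n)$, and in the nontrivial cases applies Lemma~\ref{lemma:chernoff} separately to $|S\cap H|$ and to $|S\cap\bar H|$, combining the events by a union bound. Your observation that $f^H(S)=\phi(X)$ and $g(S)=\phi(\mu)$ for the single nondecreasing $1$-Lipschitz map $\phi(t)=\min(s,t+\beta)$ collapses all of this to one concentration statement for $X=|S\cap H|$: the case analysis reduces to $s\lessgtr\beta$, no union bound is needed, and $|S\cap\bar H|$ never has to be controlled. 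Your approach also buys something substantive on the constants: the paper's Cases 3 and 4 assert that $(1-\eps)\mu\le X\le(1+\eps)\mu$ suffices, but the direction $g\le(1+\eps)f^H$ there actually requires $\mu+\beta\le(1+\eps)(X+\beta)$, which from $X\ge(1-\eps)\mu$ alone needs the extra (unstated) condition $\beta\ge\eps\mu$; your device of running the Chernoff bound at deviation $\eps/2$ together with $\mu\le g(S)$ and $1/(1-\eps/2)\le 1+\eps$ avoids this entirely without changing the exponent. The one caveat in your write-up — the regime where $\eps^2\alpha h/n$ is too small for the hypothesis $\eps^2\mu>1$ of Lemma~\ref{lemma:chernoff} — is not fully dispatched by "the right-hand side is bounded away from $1$" (one would still need the probability to exceed that bound), but the paper's own proof silently ignores the same hypothesis, and in the intended application $\eps^2\alpha h/n=n^{\beta/2}\to\infty$, so this is a shared, harmless looseness of the asymptotic statement rather than a defect of your argument.
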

    
    \begin{proof}
        For concision we define $\bar{H}\defeq N\setminus H$, the
        complement of $H$ in $N$. We consider four cases depending on the
        cardinality of $S$ and
    $S\cap\bar{H}$.
    \paragraph{Case 1:} $|S|\leq \alpha$ and $|S\cap \bar{H}|\leq
    \alpha\left(1-\frac{h}{n}\right)$. In this case $f^H(S) = |S\cap H|
    + |S\cap\bar{H}| = |S|$ and $g(S) = |S|$. The two functions are equal and
    the inequality is immediately satisfied.

    \paragraph{Case 2:} $|S|\leq\alpha$ and $|S\cap\bar{H}|\geq
    \alpha(1-\frac{h}{n})$. In this case $g(S) = |S| = |S\cap H|
    + |S\cap\bar{H}|$ and $f^H(S) = |S\cap H| + \alpha(1-\frac{h}{n})$. By
    assumption on $|S\cap\bar{H}|$, we have:
    \begin{displaymath}
        (1-\eps)\alpha\left(1-\frac{h}{n}\right)
        \leq |S\cap\bar{H}|
    \end{displaymath}
    For the other side, by assumption on $|S\cap \bar{H}|$, we have that
    $|S|\geq \alpha(1-\frac{h}{n})\geq \frac{\alpha}{2}$ (since $h\leq
    \frac{n}{2}$). We can then apply Lemma~\ref{lemma:chernoff} and obtain:
    \begin{displaymath}
        \P_H\left[
            |S\cap \bar{H}| \leq (1+\eps)\alpha\left(1-\frac{h}{n}\right)\right]
        \geq 1-2^{-\Omega(\eps^2\alpha h/n)}
    \end{displaymath}

    \paragraph{Case 3:} $|S|\geq \alpha$ and $|S\cap\bar{H}|
    \geq\alpha\left(1-\frac{h}{n}\right)$. In this case $f^H(S) = |S\cap H|
    + \alpha(1-\frac{h}{n})$ and $g(S) = \frac{|S|h}{n}
    + \alpha(1-\frac{h}{n})$. We need to show that:
    \begin{displaymath}
        \P_H\left[(1-\eps) \frac{|S|h}{n}
        \leq |S\cap H| \leq (1+\eps)\frac{|S|h}{n}\right]
        \geq 1-2^{-\Omega(\eps^2\alpha h/n)}
    \end{displaymath}
    This is a direct consequence of Lemma~\ref{lemma:chernoff}.

    \paragraph{Case 4:}$|S|\geq \alpha$ and $|S\cap\bar{H}|
    \leq\alpha\left(1-\frac{h}{n}\right)$. In this case $f^H(S) = |S\cap H|
    + |S\cap\bar{H}|$ and $g(S) = \frac{|S|h}{n}
    + \alpha(1-\frac{h}{n})$. As in the previous case, we have:
    \begin{displaymath}
        \P_H\left[(1-\eps) \frac{|S|h}{n}
        \leq |S\cap H| \leq (1+\eps)\frac{|S|h}{n}\right]
        \geq 1-2^{-\Omega(\eps^2\alpha h/n)}
    \end{displaymath}
    By the assumption on $|S\cap\bar{H}|$, we also have:
    \begin{displaymath}
        |S\cap\bar{H}|\leq \alpha\left(1-\frac{h}{n}\right)
        \leq(1+\eps)\alpha\left(1-\frac{h}{n}\right)
    \end{displaymath}
    So we need to show that:
    \begin{displaymath}
        \P_H\left[(1-\eps) \alpha\left(1-\frac{h}{n}\right)
        \leq |S\cap \bar{H}|\right]
        \geq 1-2^{-\Omega(\eps^2\alpha h/n)}
    \end{displaymath}
    and then we will be able to conclude by union bound. This is again
    a consequence of Lemma~\ref{lemma:chernoff}.
    \end{proof}
    
    \begin{theorem}
    \label{thm:msf}
        For any $0<\beta <\frac{1}{2}$, $\eps\geq \frac{1}{n^{1/2-\beta}}$, and
        any (possibly randomized) algorithm with query-complexity smaller than
        $2^{\Omega(n^{\beta/2})}$, there exists an $\eps$-approximately
        submodular function $F$ such that for the problem of maximizing $F$
        under a cardinality constraint, the algorithm achieves an approximation
        ratio upper-bounded by $\frac{2}{n^{\beta/2}}$ with probability at least
        $1-\frac{1}{2^{\Omega(n^{\beta/2})}}$.
\end{theorem}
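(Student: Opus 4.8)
The plan is to run the template from the overview with the functions $f^H$ and $g$ of~\eqref{eq:cons} and the piecewise objective $F$ (built with band width $\eps_0=n^{-(1/2-\beta)}$), choosing the parameters $h,\alpha,k$ so that the gap factor $\alpha/k+h/n$ of~\eqref{eq:gap} is $\Theta(n^{-\beta/2})$ while the per-set failure exponent $\eps_0^2\,\alpha h/n$ of~\eqref{eq:concentration} is $\Theta(n^{\beta/2})$. Since an $\eps_0$-approximately submodular function is a fortiori $\eps$-approximately submodular for every $\eps\ge\eps_0$, it suffices to treat $\eps=\eps_0=n^{-(1/2-\beta)}$. With that choice I would take $h=k=\lfloor n^{1-\beta/2}\rfloor$ and $\alpha=\lfloor n^{1-\beta}\rfloor$; ignoring the harmless rounding one checks the admissibility conditions $\alpha\le k\le h\le n/2$ (for $n$ large), computes $\alpha/k+h/n=2n^{-\beta/2}$, and computes $\eps_0^2\,\alpha h/n=n^{\beta/2}$, so that~\eqref{eq:concentration} yields $\P_H[\,F^H(S)\ne g(S)\,]\le 2^{-\Omega(n^{\beta/2})}$ for every fixed $S$.

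Next I would turn indistinguishability into inapproximability. Fix any (possibly randomized) algorithm of query-complexity $q<2^{cn^{\beta/2}}$, where $c>0$ is chosen smaller than the constant produced by~\eqref{eq:concentration} at the above parameters. For each fixed value $r$ of the algorithm's internal randomness, simulate it against the fictitious oracle that always answers $g(S)$; this yields a deterministic list of queried sets $S_1(r),\dots,S_{q}(r)$ and an output $T(r)$ with $|T(r)|\le k$, all independent of $H$. Drawing $H$ uniformly among $h$-sets, a union bound over these fixed sets together with~\eqref{eq:concentration} gives
\[
\P_H\big[\exists\,i\le q:\ F^H(S_i(r))\ne g(S_i(r))\big]\ \le\ q\cdot 2^{-\Omega(\eps_0^2\alpha h/n)}\ \le\ 2^{-\Omega(n^{\beta/2})},
\]
the last inequality using that $c$ is small enough. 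On the complementary event a trivial induction on the number of queries shows that the real execution against $F^H$ follows exactly the fictitious one, so the algorithm outputs $T(r)$.

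It remains to bound the value of $T(r)$. Any size-$k$ subset $S^\star\subseteq H$ has $f^H(S^\star)=k$, the maximum value of $f^H$, and $g(S^\star)=\tfrac{kh}{n}+\alpha\big(1-\tfrac hn\big)=k-(k-\alpha)\big(1-\tfrac hn\big)$; since $\beta<\tfrac12$ one checks that $(k-\alpha)\big(1-\tfrac hn\big)>\eps_0 k$ for our parameters, so $g(S^\star)<(1-\eps_0)f^H(S^\star)$ and hence $F^H(S^\star)=f^H(S^\star)=k$. Thus $\O:=\max_{|S|\le k}F^H(S)\ge k=\max_{|S|\le k}f^H(S)$. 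On the good event, $F^H(T(r))=g(T(r))\le\max_{|S|\le k}g(S)\le\big(\tfrac\alpha k+\tfrac hn\big)\max_{|S|\le k}f^H(S)\le 2n^{-\beta/2}\,\O$ by~\eqref{eq:gap}. Combining, for every fixed $r$ we get $\P_H[\text{the ratio of the algorithm exceeds }2n^{-\beta/2}]\le 2^{-\Omega(n^{\beta/2})}$. Averaging this bound over $r$ and then invoking the probabilistic method over $H$, there is a fixed $H^\star$ such that, over the algorithm's coins alone, the ratio exceeds $2n^{-\beta/2}$ with probability at most $2^{-\Omega(n^{\beta/2})}$; setting $F:=F^{H^\star}$, which is $\eps$-approximately submodular via the monotone submodular representative $f^{H^\star}$, completes the proof.

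The only genuinely delicate step is the joint parameter choice in the first paragraph: the gap of~\eqref{eq:gap} and the concentration exponent of~\eqref{eq:concentration} pull against each other, and $\eps=n^{-(1/2-\beta)}$ is precisely the threshold at which one can still force the gap down to order $n^{-\beta/2}$ while keeping the failure probability $2^{-\Omega(n^{\beta/2})}$ — everything after that is the routine ``fictitious $g$-oracle, union bound, average over $H$'' conversion of an indistinguishability statement into a query-complexity lower bound for approximation.
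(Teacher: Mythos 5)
Your proposal is correct and follows essentially the same route as the paper: the same construction \eqref{eq:cons}, the same parameters $h=k=n^{1-\beta/2}$, $\alpha=n^{1-\beta}$, the same ``simulate against $g$, union-bound via \eqref{eq:concentration}, then apply the probabilistic method over $H$'' argument, and the same gap bound \eqref{eq:gap}. The only (welcome) additions are minor: you explicitly reduce to $\eps=\eps_0$ and you verify that $F^H$ attains the value $k$ at a size-$k$ subset of $H$, a step the paper asserts without comment.
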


\begin{proof}
    We set $k = h = n^{1-\beta/2}$ and $\alpha = n^{1- \beta}$. Let $H$ be
    drawn uniformly at random among sets of size $h$ and let $f^H$ and $g$ be
    as in \eqref{eq:cons}. We first define the $\eps$-approximately submodular
    function $F^H$:
    \begin{displaymath}
        F^H(S) = \begin{cases}
            g(S)&\text{if } (1-\eps) f^H(S) \leq g(S)\leq (1+\eps) f^H(S)\\
            f^H(S)&\text{otherwise}
        \end{cases}
    \end{displaymath}
    It is clear from the definition that this is an $\eps$-approximately
    submodular function.
    Consider a deterministic algorithm $A$ and let us denote by $S_1,\dots,
    S_m$ the queries made by the algorithm when given as input the function $g$
    ($g$ is $0$-approximately submodular, hence it is a valid input to $A$).
    Without loss of generality, we can include the set returned by the
    algorithm in the queries, so $S_m$ denotes the set returned by the
    algorithm. By \eqref{eq:concentration}, for any $i\in[m]$:
    \begin{displaymath}
        \P_H[(1-\eps)f^H(S_i)\leq g(S_i)\leq (1+\eps) f^H(S_i)]\geq 1-
        2^{-\Omega\big(n^{\frac{\beta}{2}}\big)}
    \end{displaymath}
    when these events realize, we have $F^H(S_i) = g(S_i)$. By union
    bound over $i$, when $m< 2^{\Omega\big(n^{\frac{\beta}{2}}\big)}$:
    \begin{displaymath}
        \P_H[\forall i, F^H(S_i)=g(S_i)]>
        1 - m 2^{-\Omega\big(n^{\beta/2}\big)} = 1-
          2^{-\Omega\big(n^{\beta/2}\big)} > 0
    \end{displaymath}
    This implies the existence of $H$ such that $A$ follows the same query
    path when given $g$ and $F^H$ as inputs. For this $H$:
    \begin{displaymath}
        F^H(S_m) = g(S_m)
        \leq \max_{|S|\leq k} g(S)
        \leq \left(\frac{\alpha}{k}+\frac{h}{n}\right)\max_{|S|\leq k} f^H(S)
        = \left(\frac{\alpha}{k}+\frac{h}{n}\right)\max_{|S|\leq k} F^H(S)
    \end{displaymath}
    where the second inequality comes from \eqref{eq:gap}. For our
    choice of parameters, $\frac{\alpha}{k} + \frac{h}{n}
    = 2/n^{\frac{\beta}{2}}$, hence:
    \begin{displaymath}
        F^H(S_m) \leq 
        \frac{2}{n^{\frac{\beta}{2}}}\max_{|S|\leq k} F^H(S)
    \end{displaymath}

    Let us now consider the case where the algorithm $A$ is randomized and let
    us denote $A_{H,R}$ the solution returned by the algorithm when given
    function $F^H$ as input and random bits $R$. We have:
    \begin{align*}
        \P_{H, R}\left[ F^H(A_{H,R})\leq
        \frac{2}{n^{\beta/2}}\max_{|S|\leq k} F^H(S)\right]
        &= \sum_r \P[R=r]\P_H\left[F^H(A_{H, R})\leq
        \frac{2}{n^{\beta/2}}\max_{|S|\leq k} F^H(S)\right]\\
    &\geq (1-2^{-\Omega(n^\frac{\beta}{2})})\sum_r \P[R=r]
        = 1-2^{-\Omega(n^{\beta}{2})}
    \end{align*}
    where the equality comes from the analysis of the deterministic case
    (when the random bits are fixed, the algorithm is deterministic). This
    implies the existence of $H$ such that:
    \begin{displaymath}
        \P_{R}\left[ F^H(A_{H,R})\leq
        \frac{2}{n^{\beta/2}}\max_{|S|\leq k} F^H(S)\right]
    \geq 1-2^{-\Omega(n^{\beta}{2})}
    \end{displaymath}
    and concludes the proof of the theorem.
\end{proof}

\subsection{Coverage functions}
\label{sec:coverage}

In this section, we show that an exponential query-complexity lower bound still
holds even in the restricted case where the objective function approximates
a coverage function. Recall that by definition of a coverage function, the
elements of the ground set $N$ are subsets of a set $\mathcal{U}$ called the
\emph{universe}. For a set $S=\{\mathcal{S}_1,\dots,\mathcal{S}_m\}$ of subsets
of $\mathcal{U}$, the value $f(S)$ is given by $ f(S) = \left|\bigcup_{i=1}^m
\mathcal{S}_i\right|$.

\begin{theorem}
    \label{thm:coverage}
        For any $0<\beta <\frac{1}{2}$, $\eps\geq \frac{1}{n^{1/3-\beta}}$, and
        any (possibly randomized) algorithm with query-complexity smaller than
        $2^{\Omega(n^{\beta/2})}$, there exists a function $F$ which
        $\eps$-approximates a coverage function, such that for the problem of
        maximizing $F$ under a cardinality constraint, the algorithm achieves
        an approximation ratio upper-bounded by $\frac{2}{n^{\beta/2}}$ with
        probability at least $1-\frac{1}{2^{\Omega(n^{\beta/2})}}$.
\end{theorem}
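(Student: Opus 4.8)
The plan is to follow the template used for Theorem~\ref{thm:msf}: exhibit a random ``hard'' instance $f^H$ together with a fixed ``masking'' function $g$ such that (a) $f^H$ is a coverage function — whence the function $F^H$ built from $f^H$ and $g$ exactly as in the proof of Theorem~\ref{thm:msf} is an $\eps$-approximation of a coverage function (its representative being the coverage function $f^H$); (b) there is a factor-$\Theta(n^{-\beta/2})$ gap between $\max_{|S|\le k}f^H(S)$ and $\max_{|S|\le k}g(S)$, as in~\eqref{eq:gap}, with $\max f^H\ge k$ attained on a $k$-subset of $H$; and (c) for every fixed $S$, $(1-\eps)f^H(S)\le g(S)\le(1+\eps)f^H(S)$ with probability $1-2^{-\Omega(n^{\beta/2})}$ over the randomness of the instance. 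Once (a)--(c) are in place, the conclusion is obtained verbatim as in Theorem~\ref{thm:msf}: a deterministic algorithm fed $g$ makes queries that are oblivious to the instance randomness, so a union bound over its (at most $2^{\Omega(n^{\beta/2})}$) queries shows that with positive probability the instance agrees with $g$ on all of them; the randomized case follows by conditioning on the coins.

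The content is all in the construction of $f^H$ and $g$. The additive part $|S\cap H|$ of~\eqref{eq:cons} is a coverage function (one private universe item per element of $H$). The difficulty is the budget-additive term $\min\!\big(|S\cap\bar H|,\alpha(1-h/n)\big)$, which is \emph{not} (even approximately) a coverage function: a coverage function whose saturation value is $B$ but whose ground set has more than $B$ elements necessarily has overlaps, so it cannot agree with $|T|$ on small $T$ while saturating at $B\ll|\bar H|$ on large $T$. I would therefore replace this term by a \emph{coverage gadget}: a shared pool $\mathcal B$ of $B=\Theta(\alpha)$ universe items, with each element outside $H$ covering a (pseudo)random small sub-collection of $\mathcal B$, so that selecting many such elements essentially covers all of $\mathcal B$. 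Since the gadget's value on a set $T$ is a sum of negatively associated indicators (one per item of $\mathcal B$), it concentrates around its mean $B\big(1-(1-1/B)^{|T|}\big)$; I would then \emph{redefine} $g$ to the smoothed shape $g(S)=\tfrac{|S|h}{n}+B\big(1-(1-1/B)^{|S|(1-h/n)}\big)$, which is automatically $\le|S|$ and is monotone submodular since it is a concave non-decreasing function of $|S|$, and which still has $\max g$ of order $\tfrac{kh}{n}+B$ while $\max f^H\ge k$, giving the same gap.

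For the indistinguishability estimate (c) I would write $f^H(S)-g(S)$ as the sum of three deviations: the hypergeometric deviation of $|S\cap H|$ from $\tfrac{|S|h}{n}$ (controlled by Lemma~\ref{lemma:chernoff}), the hypergeometric deviation of $|S\cap\bar H|$ from $|S|(1-h/n)$ (Lemma~\ref{lemma:chernoff} again, used to relate the gadget's conditional mean to $g$), and the deviation of the gadget value from that conditional mean (a Chernoff bound for negatively associated variables). Propagating these bounds and choosing $k$, $h$, $\alpha$, and $B$ so that the gap is $2/n^{\beta/2}$, the exponent in the failure probability turns out to be governed by a term of order $\eps^2\cdot B\cdot\tfrac{h}{n}$ (essentially the concentration of $|S\cap H|$ at the transition scale $|S|\asymp B$), and setting this equal to $n^{\beta/2}$ is exactly what forces the weaker threshold $\eps\ge n^{-1/3+\beta}$ (with, e.g., $k\asymp n^{2/3-\beta/2}$, $h\asymp n^{1-\beta/2}$, $B\asymp n^{2/3-\beta}$, for which $B/k+h/n=2n^{-\beta/2}$).

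The step I expect to be the main obstacle is exactly the ``small-set'' regime — the same difficulty that, for Theorem~\ref{thm:msf}, was neutralised by the $\min$-truncation making $f^H$ and $g$ deterministically equal on small sets. The coverage gadget cannot reproduce that exact behavior (overlaps among the gadget's sets are unavoidable), so on small sets one only has the weaker, polynomially-small failure bounds; handling this — via a more delicate concentration of the overlap count, together with a sufficiently conservative choice of $\alpha$ and $B$ so that the union bound over the algorithm's (possibly all-small) queries still goes through, and so that the overlap structure leaks no information about $H$ — is the technical heart of the proof, and is the reason the tolerable error degrades from $n^{-1/2}$ to $n^{-1/3}$.
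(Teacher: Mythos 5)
Your high-level reduction (build $F^H$ from a random coverage function $f^H$ and a masking function $g$, establish a gap and per-set indistinguishability, union-bound over the query path) matches the paper, but the construction you propose does not go through, and the obstacle you flag at the end is not a technicality you can defer — it is fatal to the coverage-gadget route as described. If each element outside $H$ covers a small random sub-collection of a shared pool $\mathcal{B}$ of size $B$, then already for a two-element query $T$ the gadget's value is $1$ rather than $2$ with probability about $1/B$, i.e., a constant \emph{relative} deviation occurring with only polynomially small probability. A per-query failure probability of $\mathrm{poly}(1/n)$ cannot survive a union bound over $2^{\Omega(n^{\beta/2})}$ queries, and in fact an algorithm that queries all $O(n^2)$ pairs already detects the overlap structure with constant probability. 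No choice of $\alpha$ and $B$ repairs this, because the deviation on small sets is caused by a single collision, not by a sum of many indicators, so no Chernoff-type exponential concentration is available there. You correctly identify this as ``the technical heart'' but supply no mechanism to resolve it, so the proof is incomplete precisely where the new difficulty lies.

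The paper's resolution is much simpler and sidesteps the gadget entirely: it abandons the $\min$-truncation of \eqref{eq:cons} and instead sets $f^H(S)=|S\cap H|+\alpha$ and $g(S)=\tfrac{|S|h}{n}+\alpha$ for $S\neq\emptyset$ (and $0$ on $\emptyset$). Both are coverage functions — additive parts plus a common block of $\alpha$ universe elements covered by every set in the ground set. The additive offset $\alpha$ does the job of the truncation: for $|S|\leq\eps\alpha$ the multiplicative $(1\pm\eps)$ relation between $f^H(S)$ and $g(S)$ holds \emph{deterministically}, since the numerators differ by at most $|S|\leq\eps\alpha\leq\eps f^H(S)$; for $|S|\geq\eps\alpha$ one applies Lemma~\ref{lemma:chernoff} with $\mu=\tfrac{|S|h}{n}\geq\tfrac{\eps\alpha h}{n}$, giving failure probability $2^{-\Omega(\eps^3\alpha h/n)}$. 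The extra factor of $\eps$ in the exponent (the small-set threshold is $\eps\alpha$ rather than $\alpha$) is exactly what degrades the tolerable error from $n^{-1/2}$ to $n^{-1/3}$; with $h=k=n^{1-\beta/2}$ and $\alpha=n^{1-\beta}$ the gap $\alpha/k+h/n=2n^{-\beta/2}$ and the rest of the argument is verbatim Theorem~\ref{thm:msf}. Your diagnosis of \emph{why} the exponent degrades (a concentration term at the transition scale) is qualitatively right, but the correct mechanism is the offset, not a coverage gadget.
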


The proof of Theorem~\ref{thm:coverage} has the same structure as the proof of
Theorem~\ref{thm:msf}. The main difference is a different choice of class of
functions $\mathcal{F}$ and function $g$. The details can be found in the
appendix.

\section{Approximation algorithms}

The results from Section~\ref{sec:lower} can be seen as a strong impossibility
result since an exponential query-complexity lower bound holds even in the
specific case of coverage functions which exhibit a lot of structure. Faced
with such an impossibility, we analyze two ways to relax the assumptions in
order to obtain positive results. One relaxation considers $\eps$-approximate
submodularity when $\eps\leq \frac{1}{k}$; in this case we show that the Greedy
algorithm achieves a constant approximation ratio (and that $\eps=\frac{1}{k}$
is tight for the Greedy algorithm). The other relaxation considers functions
with stronger structural properties, namely, functions with \emph{bounded
curvature}. In this case, we show that a constant approximation ratio can be
obtained for any constant $\eps$.

\subsection{Greedy algorithm}
\label{sec:greedy}

For the general class of monotone submodular functions, the result of
\cite{nwf} shows that a simple greedy algorithm achieves an approximation ratio
of $1-\frac{1}{e}$. Running the same algorithm for an $\eps$-approximately
submodular function results in a constant approximation ratio when $\eps\leq
\frac{1}{k}$. The detailed description of the algorithm can be found in the
appendix.

\begin{theorem}
    \label{thm:greedy}
    Let $F$ be an $\eps$-approximately submodular function, then the set $S$
    returned by the greedy algorithm satisfies:
\begin{displaymath}
    F(S)\geq \frac{1}{1 + \frac{4k\eps}{(1-\eps)^2}}
    \left(1-\left(\frac{1-\eps}{1+\eps}\right)^{2k}\left(1-\frac{1}{k}\right)^k\right)\max_{S:|S|\leq
    k} F(S)
\end{displaymath}
    In particular, for $k\geq 2$, any constant $0\leq \delta < 1$ and $\eps
    = \frac{\delta}{k}$, this approximation ratio is constant and lower-bounded
    by $\left(1-\frac{1}{e} - 16\delta\right)$.
\end{theorem}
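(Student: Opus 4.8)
The plan is to mimic the classical greedy analysis for monotone submodular maximization, carried out on a submodular representative of $F$. Fix a submodular $f$ witnessing that $F$ is $\eps$-approximately submodular, so $(1-\eps)f(T)\le F(T)\le(1+\eps)f(T)$ for every $T\subseteq N$, and set $\gamma\defeq\frac{1-\eps}{1+\eps}$. Let $\emptyset=S_0\subseteq S_1\subseteq\cdots\subseteq S_k=S$ be the sets produced by the greedy algorithm, where $S_i=S_{i-1}\cup\{e_i\}$ with $e_i\in\argmax_e F(S_{i-1}\cup\{e\})$. Let $O\in\argmax_{|T|\le k}f(T)$; since $f$ is monotone we may take $|O|=k$, and since $f$ is a representative of $F$ we have $f(O)\ge\frac{1}{1+\eps}\max_{|T|\le k}F(T)$.

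The heart of the proof is a single-iteration inequality, obtained for each $i\in\{1,\dots,k\}$ as follows. For every $o\in O$ the set $S_{i-1}\cup\{o\}$ is feasible, so the greedy rule gives $F(S_i)\ge F(S_{i-1}\cup\{o\})$, and approximate submodularity then gives $F(S_i)\ge(1-\eps)f(S_{i-1}\cup\{o\})=(1-\eps)\bigl(f(S_{i-1})+f_{S_{i-1}}(o)\bigr)$, where $f_{S_{i-1}}(o)\ge0$ is the marginal of $o$ at $S_{i-1}$ in $f$. Averaging over the $k$ elements of $O$ and using submodularity and monotonicity of $f$ (namely $\frac1k\sum_{o\in O}f_{S_{i-1}}(o)\ge\frac1k(f(O)-f(S_{i-1}))$) yields $F(S_i)\ge(1-\eps)\bigl[(1-\tfrac1k)f(S_{i-1})+\tfrac1k f(O)\bigr]$. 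Replacing the two $f$-values on the right by $F$-values via $f(S_{i-1})\ge\frac{F(S_{i-1})}{1+\eps}$ and $f(O)\ge\frac{1}{1+\eps}\max_{|T|\le k}F(T)$ produces the linear recursion
\begin{displaymath}
    F(S_i)\ \ge\ \gamma\Bigl(1-\tfrac1k\Bigr)F(S_{i-1})\ +\ \frac{\gamma}{k}\max_{|T|\le k}F(T),\qquad i=1,\dots,k.
\end{displaymath}
The only place calling for care is this conversion to a recursion purely in $F$: if done bluntly (say, by first weakening $F(S_i)$ to $(1+\eps)f(S_i)$ and converting back at the end) one pays an extra factor of $\gamma$ per step and the contraction factor becomes $\gamma^2(1-\tfrac1k)$ — this is exactly what turns the $\gamma^k$ below into the $\gamma^{2k}$ of the theorem statement. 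Done as above, the contraction factor stays $\gamma(1-\tfrac1k)\in(0,1)$ and no per-iteration loss accumulates; I expect this to be the only real obstacle, the rest being routine.

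Unrolling the recursion from $F(S_0)=0$ amounts to summing a geometric series with ratio $\mu=\gamma(1-\tfrac1k)$, and since $1-\mu=\frac{1-\eps+2k\eps}{k(1+\eps)}$,
\begin{displaymath}
    F(S_k)\ \ge\ \frac{\gamma/k}{1-\mu}\,\bigl(1-\mu^k\bigr)\max_{|T|\le k}F(T)
    \ =\ \frac{1}{1+\frac{2k\eps}{1-\eps}}\left(1-\Bigl(\tfrac{1-\eps}{1+\eps}\Bigr)^{k}\Bigl(1-\tfrac1k\Bigr)^{k}\right)\max_{|T|\le k}F(T).
\end{displaymath}
Since $\frac{2k\eps}{1-\eps}\le\frac{4k\eps}{(1-\eps)^2}$ and $\bigl(\tfrac{1-\eps}{1+\eps}\bigr)^{k}\ge\bigl(\tfrac{1-\eps}{1+\eps}\bigr)^{2k}$, this bound is at least the one in the theorem, which therefore follows (and the stated $\gamma^{2k}$-form is reproduced verbatim by the blunter conversion mentioned above). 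For the ``in particular'' claim, substitute $\eps=\delta/k$: when $k\ge2$ and $0\le\delta<1$ we have $1-\eps=1-\delta/k\ge\tfrac12$, so $\frac{2k\eps}{1-\eps}=\frac{2\delta}{1-\delta/k}\le4\delta$ and hence $\frac{1}{1+2k\eps/(1-\eps)}\ge\frac{1}{1+4\delta}\ge1-4\delta$; moreover $\bigl(\tfrac{1-\eps}{1+\eps}\bigr)^{k}\le1$ and $(1-\tfrac1k)^{k}\le e^{-1}$, so the bracketed factor is at least $1-\tfrac1e$. Multiplying, the approximation ratio is at least $(1-4\delta)(1-\tfrac1e)\ge1-\tfrac1e-4\delta\ge1-\tfrac1e-16\delta$.
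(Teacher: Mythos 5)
Your proof follows essentially the same route as the paper's: run the classical greedy analysis on a submodular representative $f$, convert to a linear recursion in $F$ via the approximation inequalities, and unroll. Your bookkeeping is in fact slightly tighter — you pass from $f$ to $F$ only once per iteration, whereas the paper's chain ending in $f(O)\leq f(S_i)+k\bigl[\tfrac{1+\eps}{1-\eps}f(S_{i+1})-f(S_i)\bigr]$ followed by a wholesale conversion to $F$ costs a factor $\bigl(\tfrac{1-\eps}{1+\eps}\bigr)^2$ per step — so you obtain the stronger prefactor $\tfrac{1}{1+2k\eps/(1-\eps)}$ and exponent $k$ rather than $2k$. One caveat: your justification that your bound dominates the stated one is backwards in its second half, since $\gamma^{k}\geq\gamma^{2k}$ makes your bracketed factor $1-\gamma^{k}(1-\tfrac1k)^{k}$ \emph{smaller} than $1-\gamma^{2k}(1-\tfrac1k)^{k}$, so the factor-by-factor comparison does not go through as written. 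The domination nevertheless holds (compare the two recursions termwise: both the contraction factor and the additive term are at least as large in yours, so the unrolled geometric sums compare in the right direction), and in any case the blunter conversion you describe reproduces the paper's displayed bound verbatim, so the theorem — including the $1-\tfrac1e-16\delta$ consequence, for which your $1-\tfrac1e-4\delta$ is more than enough — is established.
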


\begin{proof}
    Let us denote by $O$ an optimal solution to $\max_{S:|S|\leq K} F(S)$ and
    by $f$ a submodular representative of $F$. Let us write $S
    = \{e_1,\dots,e_\ell\}$ the set returned by the greedy algoithm
    and define $S_i = \{e_1,\dots, e_i\}$, then:
    \begin{align*}
        f(O)
        &\leq f(S_i) + \sum_{e\in\text{OPT}} \big[f(S_i\cup\{e\})
        - f(S_i)\big]
        \leq f(S_i)
        + \sum_{e\in O}\left[\frac{1}{1-\eps}F(S_i\cup\{e\}) - f(S_i)\right]\\
        &\leq f(S_i)
        + \sum_{e\in O}\left[\frac{1}{1-\eps}F(S_{i+1}) - f(S_i)\right]
        \leq f(S_i)
        + \sum_{e\in O}\left[\frac{1+\eps}{1-\eps}f(S_{i+1}) - f(S_i)\right]\\
        &\leq f(S_i)
        + K\left[\frac{1+\eps}{1-\eps}f(S_{i+1}) - f(S_i)\right]
    \end{align*}
    where the first inequality uses submodularity, the second uses the
    definition of approximate submodularity, the third uses the definition of
    the Algorithm, the fourth uses approximate submodularity again and the last
    one uses that $|O|\leq k$.

    Reordering the terms, and expressing the inequality in terms of $F$ (using
    the definition of approximate submodularity) gives:
    \begin{displaymath}
        F(S_{i+1}) \geq
        \left(1-\frac{1}{k}\right)\left(\frac{1-\eps}{1+\eps}\right)^2 F(S_i)
        + \frac{1}{k}\left(\frac{1-\eps}{1+\eps}\right)^2 F(O)
    \end{displaymath}
    This is an inductive inequality of the form $u_{i+1} \geq a u_i + b$,
    $u_{0} = 0$. Whose solution is $u_i \geq \frac{b}{1-a}(1-a^i)$. For our
    specific $a$ and $b$, we obtain:
    \begin{displaymath}
        F(S)\geq \frac{1}{1 + \frac{4k\eps}{(1-\eps)^2}}
        \left(1-\left(1-\frac{1}{k}\right)^k\left(\frac{1-\eps}{1+\eps}\right)^{2k}\right)F(O)
        \qedhere
    \end{displaymath}
\end{proof}

The following proposition shows that $\eps=\frac{1}{k}$ is tight for the greedy
algorithm, and that this is the case even for additive functions. The proof can
be found in the Appendix.

\begin{proposition}
    \label{prop:greedy}
    For any $\beta > 0$, there exists an $\eps$-approximately additive function
    with $\epsilon = \Omega\left(\frac{1}{k^{1-\beta}}\right)$ for which the
    Greedy algorithm has non-constant approximation ratio.
\end{proposition}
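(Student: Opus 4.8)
The plan is to construct an $\eps$-approximately additive function on which the greedy algorithm is forced to make a bad first choice (or bad early choices), so that the cumulative error of order $k\eps$ overwhelms the optimum. The natural template is to take a ground set split into two groups: a set $O$ of $k$ ``good'' elements each with additive value $1$, and a set $D$ of ``decoy'' elements each with true value slightly below $1$, say $1-\gamma$ for a small $\gamma$. The underlying additive function is then $f(S) = \sum_{e\in S} w(e)$, which is trivially submodular (indeed modular). The optimum is $\mathrm{OPT} = f(O) = k$.

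Next I would define the perturbed function $F$ so that the greedy algorithm is fooled into picking decoys. Since $f$ is additive, greedy on $f$ would correctly pick all of $O$; the job of the $\eps$-perturbation is to reverse the ordering of marginal gains. Concretely, inflate the decoys and deflate the good elements within the allowed multiplicative window: set $F(S)$ to roughly $(1+\eps)f(S\cap D\text{-part}) + (1-\eps)f(S\cap O\text{-part})$, taking care that $F$ stays within $[(1-\eps)f, (1+\eps)f]$ on every set (this is automatic if $F$ is built coordinatewise as a sum of per-element contributions each in the right band, since then $F(S)/f(S)$ is a weighted average of ratios in $[1-\eps,1+\eps]$). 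With $\gamma$ chosen so that $(1-\gamma)(1+\eps) > 1\cdot(1-\eps)$, i.e. $\gamma < \frac{2\eps}{1+\eps}$, the apparent marginal value of a fresh decoy always exceeds that of a fresh good element, so greedy picks $k$ decoys and returns a set of true value $k(1-\gamma)$ — but this only gives a ratio $1-\gamma$, which is still constant. To actually kill the ratio one must amplify the gap: instead let the good elements have true value $1$ but let there be only a few of them relative to a large reservoir of decoys whose true values are tuned so that greedy, misled at \emph{every} step by the $\eps$ slack, ends up with a set whose $F$-value — and hence (within $1\pm\eps$) whose $f$-value — is a $\Theta(k\eps)$-fraction, not a constant fraction, of $\mathrm{OPT}$. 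The cleanest way is to stagger the decoy weights: decoy $j$ has true weight $w_j$ chosen decreasing in $j$, while $F$ reweights them upward by $(1+\eps)$ and the good elements downward by $(1-\eps)$, so that greedy's $i$-th pick is decoy $i$ (apparent value $(1+\eps)w_i$) rather than a good element (apparent value $1-\eps$), which holds as long as $(1+\eps)w_i > 1-\eps$; then $F(S_{\text{greedy}}) = (1+\eps)\sum_{i\le k} w_i$ while $F(O) \ge (1-\eps)k$, and by picking $w_i$ just barely above $\frac{1-\eps}{1+\eps}$ one still only loses a constant.

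Since the ``one perturbed modular function'' approach bottoms out at a constant loss, the actual argument must exploit that $\eps$ is as large as $\Omega(k^{-(1-\beta)})$, i.e. $k\eps = \Omega(k^\beta) \to \infty$. So the right construction chains together $t := \Theta(k^\beta)$ ``phases,'' each of length $\Theta(k^{1-\beta})$, where in phase $\ell$ the currently-available true-best elements have value scaled down by a factor $(1-\eps)^\ell$ relative to decoys that look $(1+\eps)$ times better; over $t$ phases the true value of greedy's set degrades geometrically by $\left(\frac{1-\eps}{1+\eps}\right)^{t} = e^{-\Theta(\eps t)} = e^{-\Theta(\eps^2 k)}$, which is $e^{-\Theta(k^{\beta}\cdot k^{-(1-\beta)}\cdot\dots)}$ — one tunes the phase count against $k\eps$ so the bound is $1/\mathrm{poly}(k)$ or at least non-constant, while the true optimum (achieved by a size-$k$ set of top-tier elements reachable only by an algorithm that ignores the greedy ordering) stays $\Omega(k)$. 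Throughout, I would verify: (i) the representative $f$ is genuinely additive — it is, being a sum of per-element weights; (ii) $F$ lies in the band $[(1-\eps)f,(1+\eps)f]$ pointwise, using the weighted-average-of-ratios observation; and (iii) greedy's tie-breaking/ordering is forced, by keeping all the relevant apparent marginals strictly separated by a margin that does not shrink with $k$.

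The main obstacle I anticipate is item (iii) combined with the tension in the parameter budget: one needs the per-element multiplicative slack $\eps$ (which is tiny, $k^{-(1-\beta)}$) to simultaneously (a) flip a single comparison between a decoy and a good element and (b) compound over $\Theta(k^\beta)$ phases into a \emph{non-constant} degradation — and the two demands pull in opposite directions, since making the decoys only marginally more attractive (needed to stay in the band) also makes the per-phase loss only marginal. Resolving this requires choosing the number of phases $t$ and the per-phase decoy weights so that $t$ is as large as $\Theta(\eps k) = \Theta(k^{\beta})$ (the largest it can be before the construction runs out of elements), giving total loss $\bigl(\tfrac{1-\eps}{1+\eps}\bigr)^{t}\le e^{-\Omega(\eps t)} = e^{-\Omega(\eps^2 k)}$, and then checking that $\eps^2 k = k^{2\beta-1+1}\cdot$(constants) grows — which it does once one is careful that the decoys in later phases can still be made to beat the good elements, i.e. that $(1+\eps)^{\ell}(1-\gamma)^{?}$ bookkeeping stays consistent. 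The details of this bookkeeping — ensuring the phases telescope correctly and that at the end the greedy set's true value is genuinely a $o(1)$ fraction of $\mathrm{OPT}$ for every $\beta>0$ — are the routine-but-delicate calculations I would relegate to the appendix, exactly as the paper does.
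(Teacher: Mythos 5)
Your proposal has a genuine gap, and it sits exactly where you flag the ``tension.'' Any construction in which $F$ is built coordinatewise, i.e.\ $F(S)=\sum_{e\in S}F(e)$ with each $F(e)\in[(1-\eps)f(e),(1+\eps)f(e)]$, can \emph{never} yield a non-constant ratio: such an $F$ is itself modular, greedy maximizes it exactly, and for greedy's output $S$ and an optimum $O$ of $f$ one gets $f(S)\ge\frac{1}{1+\eps}F(S)\ge\frac{1}{1+\eps}F(O)\ge\frac{1-\eps}{1+\eps}f(O)$. So both your single-phase decoy scheme and your multi-phase variant (still a sum of per-element reweightings) are dead on arrival. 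Even setting that aside, the quantitative accounting does not close: with $t=\Theta(\eps k)$ phases each losing a factor $\frac{1-\eps}{1+\eps}$, the total degradation is $e^{-\Theta(\eps t)}=e^{-\Theta(\eps^2k)}=e^{-\Theta(k^{2\beta-1})}$ (your exponent ``$k^{2\beta-1+1}$'' is an arithmetic slip), and this tends to $1$, not $0$, for every $\beta<\frac{1}{2}$ --- precisely the regime that matters, since the proposition must hold for arbitrarily small $\beta>0$.

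The idea you are missing is that approximate submodularity gives \emph{set-level} slack: on a set of value $V$ the oracle may err by $\pm\eps V$ additively, so once the current greedy set has value about $1/\eps$, the perturbation can completely erase a marginal of value $1$. The paper's construction is a one-shot trap built on this observation: $f$ is additive with a block $A$ of $\frac{1}{2\eps}$ elements of value $2$ (which greedy must take first, and which has $f(A)=\frac{1}{\eps}$), a block $C$ of unit-value elements (the correct way to spend the remaining budget), and a block $B$ of elements of value $\frac{1}{n}$. Then $F$ agrees with $f$ everywhere except that $F(A\cup\{c\})=\frac{1}{\eps}=F(A)$ for $c\in C$, which stays inside the $(1\pm\eps)$ band precisely because $f(A\cup\{c\})=\frac{1}{\eps}+1$ and $\eps\cdot\frac{1}{\eps}=1$. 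At the critical step the $C$-elements appear to have zero marginal value, greedy is diverted to $B$, and it finishes with value $O(1/\eps)=O(k^{1-\beta})$ against an optimum of $\Omega(k)$, giving ratio $O(k^{-\beta})$. The loss is thus not a product of $k$ per-step factors $(1-O(\eps))$ but a single misdirection whose cost is the entire remaining budget; a perturbation that is active on only a handful of carefully chosen sets is what makes this possible, and no per-element perturbation can replicate it.
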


\paragraph{Matroid constraint.} Theorem~\ref{thm:greedy} can be generalized
to the case of matroid constraints. We are now looking at a problem of the
form: $\max_{S\in I} F(S)$, where $I$ is the set of independent sets of
a matroid. 

\begin{theorem}
    \label{thm:matroid}
    Let $I$ be the set of independent sets of a matroid of rank $k$, and let
    $F$ be an $\eps$-approximately submodular function, then if $S$ is the set
    returned by the greedy algorithm:
    \begin{displaymath}
        F(S)\geq \frac{1}{2}\left(\frac{1-\eps}{1+\eps}\right)\frac{1}{1+\frac{k\eps}{1-\eps}} \max_{S\in I} f(S)
    \end{displaymath}
    In particular, for $k\geq 2$, any constant $0\leq \delta <1 $ and
    $\eps=\frac{\delta}{k}$, this approximation ratio is constant and
    lower-bounded by $(\frac{1}{2}-2\delta)$.
\end{theorem}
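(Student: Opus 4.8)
The proof mirrors that of Theorem~\ref{thm:greedy}, replacing the cardinality structure by the exchange structure of a matroid. Fix a submodular representative $f$ of $F$, let $O$ be an independent set attaining $\max_{S\in I}f(S)$, extended — using monotonicity of $f$, which does not decrease its value — to a base of the matroid, and let $S=\{e_1,\dots,e_k\}$ be the base produced by the greedy algorithm run on $F$, with $S_i\defeq\{e_1,\dots,e_i\}$. The combinatorial ingredient is the standard matroid exchange lemma: since $S$ and $O$ are both bases, the elements of $O$ can be enumerated as $o_1,\dots,o_k$ so that $S_{i-1}\cup\{o_i\}\in I$ for every $i\in[k]$. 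Hence at step $i$ the set $S_{i-1}\cup\{o_i\}$ is a feasible candidate, and the greedy rule gives $F(S_i)\geq F(S_{i-1}\cup\{o_i\})$.

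I would then convert this into an inequality on $f$ via \eqref{eq:def}: $(1+\eps)f(S_i)\geq F(S_i)\geq F(S_{i-1}\cup\{o_i\})\geq (1-\eps)f(S_{i-1}\cup\{o_i\})$, so that $f(S_{i-1}\cup\{o_i\})-f(S_{i-1})\leq \tfrac{1+\eps}{1-\eps}f(S_i)-f(S_{i-1})$. Summing over $i$, the right-hand side telescopes and, using $f(S_0)=0$, $f(S_i)\leq f(S)$ by monotonicity, and $\tfrac{1+\eps}{1-\eps}-1=\tfrac{2\eps}{1-\eps}$, is bounded by $\bigl(1+\tfrac{2k\eps}{1-\eps}\bigr)f(S)$. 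For the left-hand side, submodularity gives $f(S_{i-1}\cup\{o_i\})-f(S_{i-1})\geq f(S\cup\{o_i\})-f(S)$ since $S_{i-1}\subseteq S$, and summing these, then applying submodularity once more together with monotonicity, yields $\sum_i\bigl[f(S_{i-1}\cup\{o_i\})-f(S_{i-1})\bigr]\geq f(S\cup O)-f(S)\geq f(O)-f(S)$. Combining the two sides gives $f(O)\leq 2\bigl(1+\tfrac{k\eps}{1-\eps}\bigr)f(S)$, i.e. $f(S)\geq \tfrac12\cdot\tfrac{1}{1+k\eps/(1-\eps)}f(O)$; translating back with $F(S)\geq(1-\eps)f(S)$ and recalling $f(O)=\max_{S\in I}f(S)$ produces the stated bound, the exact placement of the $F\leftrightarrow f$ conversions accounting for the precise $\tfrac{1-\eps}{1+\eps}$ coefficient.

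For the ``in particular'' clause, I would substitute $\eps=\delta/k$: then $\tfrac{1-\eps}{1+\eps}\geq 1-2\eps\geq 1-\delta$ when $k\geq2$, and $\tfrac{1}{1+k\eps/(1-\eps)}=\tfrac{1}{1+\delta/(1-\delta/k)}\geq 1-\tfrac{\delta}{1-\delta/k}\geq 1-2\delta$ for $\delta<1$, $k\geq2$, so the ratio is at least $\tfrac12(1-\delta)(1-2\delta)\geq \tfrac12-2\delta$. The main obstacle is not any single inequality but the bookkeeping: one must verify that the greedy algorithm, run on the non-submodular function $F$, still terminates at a base (so that the exchange lemma applies with $|S|=|O|=k$; if the greedy rule only adds an element when it strictly improves the $F$-value, a short additional argument is needed to handle premature stopping, since $\tfrac{1-\eps}{1+\eps}<1$), and one must insert the $(1\pm\eps)$ conversions economically — keeping the telescoping sum in terms of $f$ and passing to $F$ only at the end — so that only the claimed multiplicative loss $\tfrac12\cdot\tfrac{1-\eps}{1+\eps}\cdot\tfrac{1}{1+k\eps/(1-\eps)}$ is incurred rather than something weaker.
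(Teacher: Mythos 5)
Your proof is correct and follows essentially the same route as the paper's: the same basis-exchange bijection ordering $O$ against $S$, the same greedy-step inequality $F(S_i)\geq F(S_{i-1}\cup\{o_i\})$ converted to $f$ via the $(1\pm\eps)$ bounds, the same telescoping sum giving $f(O)\leq 2\bigl(1+\tfrac{k\eps}{1-\eps}\bigr)f(S)$, and a final conversion back to $F$. The side issues you flag (greedy terminating at a base, availability of $o_i$ at step $i$) are handled implicitly by Algorithm~\ref{alg:matroid} exactly as in the paper.
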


\subsection{Bounded curvature}
\label{sec:curvature}

With an additional assumption on the curvature of the submodular function $f$,
it is possible to obtain a constant approximation ratio for any
$\eps$-approximately submodular function with constant $\eps$. Recall that the
curvature $c$ of function $f:2^N \to \R$ is defined by $ c = 1 - \min_{a\in
N}\frac{f_{N\setminus\{a\}}(a)}{f(a)}$. A consequence of this definition when
$f$ is submodular is that for any $S \subseteq N$ and $a \in N\setminus S$ we
have that $f_S(a) \geq (1-c) f(a)$.

\begin{proposition}
    \label{prop:curvature}
    For the problem $\max_{|S|\leq k} F(S)$ where $F$ is an
    $\eps$-approximately submodular function which approximates a monotone
    submodular $f$ with curvature $c$, there exists a polynomial time
    algorithm which achieves an approximation ratio of
    $(1-c)(\frac{1-\eps}{1+\eps})^2$.
\end{proposition}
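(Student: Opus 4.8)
The plan is to show that the obvious algorithm — pick the $k$ singletons $\{a\}$ with the largest values $F(\{a\})$ — already achieves the stated ratio when $f$ has curvature $c$. Let me think through why.

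Let $f$ be a submodular representative of $F$ with curvature $c$, let $O$ be an optimal solution to $\max_{|S| \le k} F(S)$, and let $S$ be the set of $k$ elements with largest $F$-value of their singleton. The key chain of inequalities I'd write: first, by the curvature property stated in the excerpt, for any set $T$ and disjoint $a$, $f_T(a) \ge (1-c) f(a)$, so for $O = \{o_1, \dots, o_{|O|}\}$, telescoping gives $f(O) \le \frac{1}{1-c} \sum_{o \in O} f(\{o\})$. Second, $\sum_{o \in O} f(\{o\}) \le k \max_{a} f(\{a\})$ — actually more carefully, $\sum_{o \in O} f(\{o\}) \le \sum_{a \in S} f(\{a\})$ is not immediate since $S$ is chosen by $F$-value not $f$-value, but we can afford the $(1-\eps)/(1+\eps)$ slack: $f(\{o\}) \le \frac{1}{1-\eps} F(\{o\})$ and $F(\{o\}) \le \max_{a} F(\{a\}) = F(\{a^*\})$ for the top element $a^*$, and $F(\{a^*\}) \le (1+\eps) f(\{a^*\})$. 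Third, $f(S) \ge \max_{a \in S} f(\{a\}) \ge f(\{a^*\})$ by monotonicity, so in fact comparing $\sum_{o \in O} f(\{o\}) \le k \cdot \frac{1+\eps}{1-\eps} \max_{a} f(\{a\})$ is the wrong bound — the factor $k$ kills the ratio. So the singleton-max algorithm is \emph{not} the right one.

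The right algorithm must be greedy. The plan: run the standard Greedy algorithm on $F$, producing $S = \{e_1, \dots, e_k\}$. Using curvature, after $i$ steps the greedy gain satisfies, for the true $f$, that $f_{S_i}(e_{i+1}) \ge \frac{1-\eps}{1+\eps} \max_{a} f_{S_i}(a) \ge \frac{1-\eps}{1+\eps} (1-c) \max_{a} f(\{a\})$, but more usefully we compare against $O$: by submodularity $f(O \cup S_i) - f(S_i) \le \sum_{o \in O} f_{S_i}(o)$, and by curvature each greedy step adds at least $\frac{1-\eps}{1+\eps}$ times the best marginal, which is at least $\frac{1-\eps}{1+\eps} \cdot \frac{1}{k}$ times $\sum_{o\in O} f_{S_i}(o) \ge \frac{1-\eps}{1+\eps}\cdot\frac{1}{k}(f(O) - f(S_i))$... this recovers the $1-1/e$ behavior, not $(1-c)$. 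The cleaner route exploiting curvature directly: the curvature bound gives $f(S) \ge (1-c) \sum_{j=1}^{k} f(\{e_j\})$... no. Let me instead use: $f(S) \ge (1-c)\bigl[ f(S \cup O) - f(O) \bigr] + $ (something), via the known curvature inequality $f(S) + (1-c) f(O) \ge f(S \cup O) \ge f(O)$ — wait, the standard fact is $f(A) \ge (1-c) \sum_{a \in A} f_{A \setminus a}(a)$ is not what I want either. The genuinely useful curvature fact is: for submodular monotone $f$ with curvature $c$ and any sets $A, B$, $f(A \cup B) \le f(A) + (1-c)\sum_{b \in B \setminus A} f(\{b\})$ is false in general; what IS true is $f(A) \ge (1-c) f(A \cup B) + c \cdot$(lower-order), concretely $f(B) \ge (1-c) f(B \cup A) \ge (1-c) f(A)$ when... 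I should just cite that greedy on $f$ itself gives a $(1 - c/e) \ge (1-c)$-type guarantee (Conforti–Cornuéjols), then track the $\eps$ losses: each greedy comparison loses a factor $\frac{1-\eps}{1+\eps}$ (choosing the max of $F$ rather than $f$) and translating the final $f(S)$ to $F(S)$ loses another $(1-\eps)$ against an $f(O) \ge \frac{1}{1+\eps} F(O)$ — collecting these gives the claimed $(1-c)\bigl(\frac{1-\eps}{1+\eps}\bigr)^2$.

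So the concrete plan: (1) Run Greedy on $F$, get $S$. (2) Establish, via curvature, that for the representative $f$, $f(S) \ge (1-c) \cdot \frac{1-\eps}{1+\eps} \cdot f(O)$ — this is the heart, done by writing $f(O) \le f(S \cup O)$, bounding $f(S\cup O) - f(S) \le \sum_{o \in O \setminus S} f_S(\{o\})$ by submodularity, and bounding each $f_S(\{o\})$ against the greedy choices while paying one $\frac{1-\eps}{1+\eps}$ factor per comparison step and invoking the curvature inequality $f_S(\{o\}) \ge (1-c) f(\{o\})$ in reverse to control accumulated marginals — effectively showing greedy captures a $(1-c)$ fraction. (3) Convert: $F(S) \ge (1-\eps) f(S) \ge (1-\eps)(1-c)\frac{1-\eps}{1+\eps} f(O) \ge (1-c)\bigl(\frac{1-\eps}{1+\eps}\bigr)^2 F(O)$. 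The main obstacle is step (2): getting exactly the factor $(1-c)$ rather than $(1-c/e)$ or $(1-1/e)$. This forces a specific argument — I'd bound $f(S \cup O) - f(S) \le \sum_{i : o_i \notin S} f_{S}(\{o_i\})$, then use that greedy's $j$-th marginal on $F$ dominates $\frac{1-\eps}{1+\eps} f_{S_{j-1}}(\{o_i\}) \ge \frac{1-\eps}{1+\eps} f_S(\{o_i\}) \ge \frac{1-\eps}{1+\eps}(1-c) f_{S_{j-1}}(\{o_i\})$, and sum; handling the case $|O| > $ (number of distinct greedy steps), and the case $O \subseteq S$ separately, are the fiddly but routine parts I would not grind through here.
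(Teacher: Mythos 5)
You dismissed the correct algorithm and then failed to complete the one you replaced it with. The paper's proof is exactly the ``top-$k$ singletons by $F$-value'' algorithm that you considered first and rejected. Your rejection rests on a miscalculation: to lower-bound the value of the selected set $S$ you used only monotonicity, $f(S)\ge\max_{a\in S}f(\{a\})$, which indeed loses a factor $k$. The step you missed is that curvature gives a \emph{telescoping} lower bound on the whole set: writing $S=\{e_1,\dots,e_k\}$, $f(S)=\sum_i f_{S_{i-1}}(e_i)\ge(1-c)\sum_{e\in S}f(\{e\})$. Combined with subadditivity, $f(O)\le\sum_{o\in O}f(\{o\})$, this says the additive function $F_a(S)=\sum_{e\in S}F(\{e\})$ sandwiches $F$ up to a factor $(1-c)$ and two factors of $\frac{1+\eps}{1-\eps}$ (one for each direction of the approximate-submodularity conversion on singletons and on sets). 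Maximizing $F_a$ under the cardinality constraint is exactly picking the top $k$ singletons, and the claimed ratio $(1-c)\bigl(\frac{1-\eps}{1+\eps}\bigr)^2$ drops out. Crucially, this algorithm queries $F$ only at singletons, so the noise penalty is paid a bounded number of times independent of $k$.

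The greedy route you fell back on has a genuine obstruction, not just ``fiddly but routine'' details. With an $\eps$-approximate oracle, the greedy selection rule only guarantees $F(S_{i})\ge F(S_{i-1}\cup\{o\})$, which translates to $f(S_i)\ge\frac{1-\eps}{1+\eps}\,f(S_{i-1}\cup\{o\})$: the noise factor multiplies the \emph{entire} accumulated value $f(S_{i-1})$, not just the marginal gain at step $i$. Iterating over $k$ steps compounds this to $\bigl(\frac{1-\eps}{1+\eps}\bigr)^{k}$, which is precisely why the paper's Theorem~\ref{thm:greedy} only yields a constant ratio when $\eps=O(1/k)$. Proposition~\ref{prop:curvature} is interesting exactly because it holds for \emph{constant} $\eps$, and no straightforward greedy analysis delivers a loss of only $\bigl(\frac{1-\eps}{1+\eps}\bigr)^{2}$. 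Your step (2) is therefore not a routine gap-fill; as sketched, it would fail to give the stated bound.
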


\bibliography{main,references_aitf2}

\begin{thebibliography}{10}

\bibitem{ba10}
F.~Bach.
\newblock Structured sparsity-inducing norms through submodular functions.
\newblock In {\em NIPS}, 2010.

\bibitem{sketching}
A.~Badanidiyuru, S.~Dobzinski, H.~Fu, R.~Kleinberg, N.~Nisan, and
  T.~Roughgarden.
\newblock Sketching valuation functions.
\newblock In {\em SODA}, pages 1025--1035. SIAM, 2012.

\bibitem{bh11}
M.-F. Balcan and N.~J. Harvey.
\newblock Learning submodular functions.
\newblock In {\em Proceedings of the forty-third annual ACM symposium on Theory
  of computing}, pages 793--802. ACM, 2011.

\bibitem{belloni}
A.~Belloni, T.~Liang, H.~Narayanan, and A.~Rakhlin.
\newblock Escaping the local minima via simulated annealing: Optimization of
  approximately convex functions.
\newblock In {\em COLT}, pages 240--265, 2015.

\bibitem{chen2015}
Y.~Chen, S.~H. Hassani, A.~Karbasi, and A.~Krause.
\newblock Sequential information maximization: When is greedy near-optimal?
\newblock In {\em COLT}, pages 338--363, 2015.

\bibitem{das12}
A.~Das, A.~Dasgupta, and R.~Kumar.
\newblock Selecting diverse features via spectral relaxation.
\newblock In {\em NIPS}, 2012.

\bibitem{dake11}
A.~Das and D.~Kempe.
\newblock Submodular meets spectral: Greedy algorithms for subset selection,
  sparse approximation and dictionary selection.
\newblock In {\em ICML}, 2011.

\bibitem{deca12}
A.~Defazio and T.~Caetano.
\newblock A convex formulation for learning scale-free networks via submodular
  relaxation.
\newblock In {\em NIPS}, 2012.

\bibitem{negative}
D.~P. Dubhashi, V.~Priebe, and D.~Ranjan.
\newblock Negative dependence through the fkg inequality.
\newblock {\em BRICS Report Series}, 3(27), 1996.

\bibitem{golkr11}
D.~Golovin and A.~Krause.
\newblock Adaptive submodularity: Theory and applications in active learning
  and stochastic optimization.
\newblock {\em JAIR}, 42:427--486, 2011.

\bibitem{gomes10}
R.~Gomes and A.~Krause.
\newblock Budgeted nonparametric learning from data streams.
\newblock In {\em ICML}, 2010.

\bibitem{sensor}
C.~Guestrin, A.~Krause, and A.~Singh.
\newblock Near-optimal sensor placements in {G}aussian processes.
\newblock In {\em International Conference on Machine Learning (ICML)}, August
  2005.

\bibitem{guibi11}
A.~Guillory and J.~Bilmes.
\newblock Simultaneous learning and covering with adversarial noise.
\newblock In {\em ICML}, 2011.

\bibitem{hs16}
A.~Hassidim and Y.~Singer.
\newblock Submodular optimization under noise.
\newblock {\em CoRR}, abs/1601.03095, 2016.

\bibitem{hjzl06}
S.~Hoi, R.~Jin, J.~Zhu, and M.~Lyu.
\newblock Batch mode active learning and its application to medical image
  classification.
\newblock In {\em ICML}, 2006.

\bibitem{sosc}
R.~K. Iyer and J.~A. Bilmes.
\newblock Submodular optimization with submodular cover and submodular knapsack
  constraints.
\newblock In {\em NIPS}, pages 2436--2444, 2013.

\bibitem{curvature}
R.~K. Iyer, S.~Jegelka, and J.~A. Bilmes.
\newblock Curvature and optimal algorithms for learning and minimizing
  submodular functions.
\newblock In {\em NIPS}, pages 2742--2750, 2013.

\bibitem{kkt03}
D.~Kempe, J.~Kleinberg, and E.~Tardos.
\newblock Maximizing the spread of influence through a social network.
\newblock In {\em KDD}, 2003.

\bibitem{observation}
A.~Krause and C.~Guestrin.
\newblock Near-optimal observation selection using submodular functions.
\newblock In {\em National Conference on Artificial Intelligence (AAAI), Nectar
  track}, July 2007.

\bibitem{krgu07}
A.~Krause and C.~Guestrin.
\newblock Nonmyopic active learning of gaussian processes. an
  exploration--exploitation approach.
\newblock In {\em ICML}, 2007.

\bibitem{krgue11acm}
A.~Krause and C.~Guestrin.
\newblock Submodularity and its applications in optimized information
  gathering.
\newblock {\em ACM Trans. on Int. Systems and Technology}, 2(4), 2011.

\bibitem{libi11}
H.~Lin and J.~Bilmes.
\newblock A class of submodular functions for document summarization.
\newblock In {\em ACL/HLT}, 2011.

\bibitem{nwf}
G.~L. Nemhauser, L.~A. Wolsey, and M.~L. Fisher.
\newblock An analysis of approximations for maximizing submodular set
  functions—i.
\newblock {\em Mathematical Programming}, 14(1):265--294, 1978.

\bibitem{gor11}
M.~G. Rodriguez, J.~Leskovec, and A.~Krause.
\newblock Inferring networks of diffusion and influence.
\newblock {\em ACM TKDD}, 5(4), 2011.

\bibitem{gor12}
M.~G. Rodriguez and B.~Sch{\"{o}}lkopf.
\newblock Submodular inference of diffusion networks from multiple trees.
\newblock In {\em ICML}, 2012.

\bibitem{singerinfo}
Y.~Singer and J.~Vondr{\'a}k.
\newblock Information-theoretic lower bounds for convex optimization with
  erroneous oracles.
\newblock In {\em NIPS}, pages 3186--3194, 2015.

\bibitem{singla2015}
A.~Singla, S.~Tschiatschek, and A.~Krause.
\newblock Noisy submodular maximization via adaptive sampling with applications
  to crowdsourced image collection summarization.
\newblock {\em arXiv preprint arXiv:1511.07211}, 2015.

\bibitem{song14}
H.~Song, R.~Girshick, S.~Jegelka, J.~Mairal, Z.~Harchaoui, and T.~Darrell.
\newblock On learning to localize objects with minimal supervision.
\newblock In {\em ICML}, 2014.

\bibitem{tschia14}
S.~Tschiatschek, R.~Iyer, H.~Wei, and J.~Bilmes.
\newblock Learning mixtures of submodular functions for image collection
  summarization.
\newblock In {\em NIPS}, 2014.

\bibitem{zheng14}
J.~Zheng, Z.~Jiang, R.~Chellappa, and J.~Phillips.
\newblock Submodular attribute selection for action recognition in video.
\newblock In {\em NIPS}, 2014.

\end{thebibliography}
\small{\bibliographystyle{abbrv}}

\appendix
\section{Proof of Theorem~\ref{thm:coverage}}

\begin{proof}
    The proof follows the same structure as the proof of Theorem~\ref{thm:msf}
    but uses a different construction for $f^H$ and $g$ since the ones defined
    defined in Section~\ref{sec:msf}  are not coverage functions. For
    $H\subseteq N$ of size $h$, we define:
\begin{displaymath}
    f^H(S) = \begin{cases}
        |S\cap H | + \alpha&\text{if }S\neq\emptyset\\
        0 &\text{otherwise}
    \end{cases}
    \quad\text{and}\quad
    g(S) = \begin{cases}
        \frac{|S|h}{n} + \alpha&\text{if }S\neq\emptyset\\
        0&\text{otherwise}
    \end{cases}
    \end{displaymath}

    It is clear that $f^H$ and $g$ can be realized as coverage functions:
    $|S\cap H|$ and $\frac{|S|h}{n}$ are additive functions which are
    a subclass of coverage functions. The offset of $\alpha$ can be obtained by
    having all sets defining $f^H$ and $g$ cover the same $\alpha$ elements of
    the universe.

    We now relate the maxima of $g$ and $f^H$: the maximum of $f^H$ is attained
    when $S$ is a subset of $H$ of size $k$ and is equal to $k +\alpha\geq k$.
    The value of $g$ only depends on $|S|$ and is equal to $\frac{kh}{n}
    + \alpha$ when $|S|$ is of size $k$. Hence:
    \begin{equation}
        \label{eq:gap2}
        \max_{|S|\leq k} g(S) \leq
        \left(\frac{\alpha}{k}+\frac{h}{n}\right)\max_{|S|\leq k} f^H(S)
    \end{equation}

    We now show a concentration result similar to \eqref{eq:gap}: let $H$ be
    drawn uniformly at random among sets of size $h$, then for any $S$ and
    $0<\eps< 1$:
    \begin{equation}
        \label{eq:concentration2}
        \P_H\big[(1-\eps) f^H(S) \leq g(S) \leq (1+\eps)f^H(S)\big]
        \geq 1-2^{-\Omega(\eps^3\alpha h/n)}
    \end{equation}
    We will consider two cases depending on the size of $|S|$.  When $|S|\leq
    \eps\alpha$, the inequality is deterministic. For the right-hand side:
    \begin{displaymath}
        (1+\eps) f(S) \geq (1+\eps)\alpha \geq \alpha + |S| \geq \alpha
        + \frac{|S|h}{n} = g(S)
    \end{displaymath}
    where the first inequality used $|S\cap H | \geq 0$, the second inequality
    used the bound on $|S|$  and the last inequality used $h\leq n$. For the
    left-hand side:
    \begin{align*}
        (1-\eps) f(S) = (1-\eps)\alpha + (1-\eps)|S\cap H|
        \leq \alpha - \eps\alpha + |S|\leq \alpha \leq g(S)
    \end{align*}
    where the first inequality used $1-\eps\leq 1$ and $|S\cap H|\leq |S|$ and
    the second inequality used the bound on $|S|$.

    Let us now consider the case where $|S|\geq\eps\alpha$. This case follows
    directly by applying Lemma~\ref{lemma:chernoff} after observing that when
    $|S|\geq \alpha$, $\mu\geq \frac{\eps\alpha h}{n}$.

We can now conclude the proof of Theorem~\ref{thm:coverage} by combining
\eqref{eq:gap2} and \eqref{eq:concentration2} the exact same manner as the
proof of Theorem~\ref{thm:msf} after setting  $h=k=n^{1-\beta/2}$ and $\alpha
= n^{1-\beta}$.
\end{proof}

\section{Proof of Lemma~\ref{lemma:chernoff}}

The Chernoff bound stated in Lemma~\ref{lemma:chernoff} does not follow from
the standard Chernoff bound for independent variables. However, we use the fact
that the Chernoff bound also holds under the weaker \emph{negative association}
assumption.

\begin{definition}
    Random variables $X_1,\dots,X_n$ are \emph{negatively associated} iff for
    every $I\subseteq [n]$ and every non-decreasing functions $f:\R^{I}\to\R$
    and $g:\R^{\bar{I}}\to\R$:
    \begin{displaymath}
        \E[f(X_i, i\in I)g(X_j, j\in \bar{I})] \leq 
        \E[f(X_i, i\in I)]\E[g(X_j, j\in \bar{I})]
    \end{displaymath}
\end{definition}

\begin{claim}[\cite{negative}] Let $X_1,\dots, X_n$ be $n$ negatively associated random
    variables taking value in $[0,1]$. Denote by $\mu = \sum_{i=1}^n\E[X_i]$
    the expected value of their sum, then for any $\delta\in[0,1]$:
    \begin{gather*}
        \P\big[\sum_{i=1}^nX_i> (1+\delta)\mu\big] \leq e^{-\delta^2\mu/3}\\
        \P\big[\sum_{i=1}^nX_i< (1-\delta)\mu\big] \leq e^{-\delta^2\mu/2}
    \end{gather*}
\end{claim}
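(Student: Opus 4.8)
The plan is to run the classical exponential-moment (Chernoff) argument essentially verbatim, isolating the single step where independence is normally invoked --- the factorization of the moment generating function of $\sum_i X_i$ --- and replacing it by an inequality valid under negative association.

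\textbf{Step 1: sub-multiplicativity of the MGF.} I would first record two closure properties that are immediate from the definition of negative association: (i) every sub-collection $\{X_i : i\in J\}$ of negatively associated variables is again negatively associated (a test function depending only on coordinates in $J$ is also a valid test function on the full index set, constant in the remaining coordinates); and (ii) if $\phi_1,\dots,\phi_n$ are either all non-decreasing or all non-increasing, then $\phi_1(X_1),\dots,\phi_n(X_n)$ are negatively associated (composing test functions with the $\phi_i$ preserves monotonicity in a common direction). Now fix $s\in\R$. If $s\geq 0$, the maps $x\mapsto e^{sx}$ are non-decreasing, so peeling one coordinate at a time --- at each step writing $\E\big[e^{s\sum_{i} X_i}\big]$ as $\E[\,e^{sX_1}\cdot e^{s\sum_{i\geq 2}X_i}\,]$, applying the definition with $I=\{1\}$, then recursing on $\{X_2,\dots,X_n\}$ via (i) --- yields
\[
\E\big[e^{s\sum_{i=1}^n X_i}\big]\ \leq\ \prod_{i=1}^n \E\big[e^{sX_i}\big].
\]
If $s<0$, property (ii) shows the nonnegative variables $Y_i\defeq e^{sX_i}$ are themselves negatively associated, and since $y\mapsto y$ is non-decreasing the same peeling argument gives $\E[\prod_i Y_i]\leq\prod_i\E[Y_i]$, i.e. the same bound for negative $s$.

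\textbf{Step 2: per-coordinate bound and the Chernoff finish.} Since $X_i\in[0,1]$, convexity of $x\mapsto e^{sx}$ gives $e^{sX_i}\leq 1+(e^s-1)X_i$ pointwise, hence $\E[e^{sX_i}]\leq 1+(e^s-1)\E[X_i]\leq \exp\big((e^s-1)\E[X_i]\big)$ for every real $s$; combined with Step 1, $\E\big[e^{s\sum_i X_i}\big]\leq \exp\big((e^s-1)\mu\big)$. For the upper tail, Markov's inequality applied to $e^{s\sum_i X_i}$ with $s=\ln(1+\delta)\geq 0$ gives $\P\big[\sum_i X_i>(1+\delta)\mu\big]\leq \big(e^{\delta}/(1+\delta)^{1+\delta}\big)^{\mu}$, which is at most $e^{-\delta^2\mu/3}$ by the elementary inequality $(1+\delta)\ln(1+\delta)\geq \delta+\delta^2/3$ on $[0,1]$. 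For the lower tail, Markov applied to $e^{-s\sum_i X_i}$ with $s=-\ln(1-\delta)\geq 0$ gives $\P\big[\sum_i X_i<(1-\delta)\mu\big]\leq \big(e^{-\delta}/(1-\delta)^{1-\delta}\big)^{\mu}\leq e^{-\delta^2\mu/2}$, using $(1-\delta)\ln(1-\delta)+\delta\geq \delta^2/2$ on $[0,1]$. Both elementary inequalities follow by noting that the relevant function and its first derivative vanish at $\delta=0$ while its second derivative has constant sign on $[0,1]$ --- a routine one-variable check I would relegate to a line of calculus.

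\textbf{Main obstacle.} There is essentially no obstacle beyond bookkeeping once Step 1 is in place; the one delicate point is Step 1 for $s<0$, which is needed for the lower-tail bound. One cannot feed the non-increasing functions $x\mapsto e^{sx}$ directly into the definition of negative association as stated (which quantifies over non-decreasing test functions); the fix is closure property (ii), which lets one first pass to the transformed variables $e^{sX_i}$ --- still negatively associated --- and only then apply the definition. Everything after that is the standard Chernoff calculation transcribed unchanged.
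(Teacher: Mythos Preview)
The paper does not prove this claim; it is quoted from the cited reference \cite{negative} and used as a black box. Your proposal supplies exactly the standard argument one finds there: sub-multiplicativity of the moment generating function under negative association, followed by the textbook Chernoff calculation for $[0,1]$-valued variables. The approach is correct and is the expected one.

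One small imprecision: in the final sentence you assert that for both elementary inequalities the second derivative of the relevant difference has constant sign on $[0,1]$. This is true for the lower-tail inequality, where the difference $g(\delta)=(1-\delta)\ln(1-\delta)+\delta-\delta^2/2$ satisfies $g''(\delta)=\delta/(1-\delta)\geq 0$. It is not true for the upper-tail inequality: with $h(\delta)=(1+\delta)\ln(1+\delta)-\delta-\delta^2/3$ one has $h''(\delta)=\tfrac{1}{1+\delta}-\tfrac{2}{3}$, which changes sign at $\delta=\tfrac{1}{2}$. The inequality $h\geq 0$ on $[0,1]$ still holds, but the argument needs one more line: $h'$ is increasing on $[0,\tfrac{1}{2}]$ and decreasing on $[\tfrac{1}{2},1]$ with $h'(0)=0$ and $h'(1)=\ln 2-\tfrac{2}{3}>0$, hence $h'\geq 0$ throughout and $h$ is non-decreasing from $h(0)=0$. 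This is a bookkeeping fix, not a gap in the strategy.
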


\begin{claim}[\cite{negative}]
    \label{claim:foo}
    Let $H$ be a random subset of size $h$ of $[n]$ and let us define the
    random variables $X_i = 1$ if $i\in H$  and $X_i = 0$ otherwise. Then
    $X_1,\dots, X_n$ are negatively associated.
\end{claim}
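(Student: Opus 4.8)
The plan is to give a direct, elementary argument rather than invoke the general theory of negative association. Fix $I\subseteq[n]$ and non-decreasing $f:\R^I\to\R$, $g:\R^{\bar I}\to\R$; since $X_i\in\{0,1\}$, only the finitely many values of $f$ on $\{0,1\}^I\cong 2^I$ and of $g$ on $\{0,1\}^{\bar I}\cong 2^{\bar I}$ matter. A bounded non-decreasing function $\phi$ on such a cube equals, up to an additive constant, a non-negative combination of indicators of its superlevel sets $\{x:\phi(x)>t\}$, and each superlevel set is up-closed under taking supersets. By bilinearity of covariance (additive constants contribute nothing), it therefore suffices to prove
\begin{displaymath}
\P\big[H\cap I\in\mathcal U\ \text{and}\ H\cap\bar I\in\mathcal V\big]\ \le\ \P\big[H\cap I\in\mathcal U\big]\,\P\big[H\cap\bar I\in\mathcal V\big]
\end{displaymath}
for every up-closed $\mathcal U\subseteq 2^{I}$ and $\mathcal V\subseteq 2^{\bar I}$, where we identify $X_I$ with $H\cap I$ and $X_{\bar I}$ with $H\cap\bar I$.

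The key observation is that, conditioned on $j:=|H\cap I|$, the sets $H\cap I$ and $H\cap\bar I$ are \emph{independent}, uniform over the $j$-subsets of $I$ and the $(h-j)$-subsets of $\bar I$ respectively. Writing $p_j=\P[|H\cap I|=j]$, $r_j=\P[\text{a uniform }j\text{-subset of }I\text{ lies in }\mathcal U]$ and $q_j=\P[\text{a uniform }(h-j)\text{-subset of }\bar I\text{ lies in }\mathcal V]$, conditional independence gives $\P[H\cap I\in\mathcal U,\ H\cap\bar I\in\mathcal V]=\sum_j p_jr_jq_j$, $\P[H\cap I\in\mathcal U]=\sum_j p_jr_j$ and $\P[H\cap\bar I\in\mathcal V]=\sum_j p_jq_j$, so the desired inequality is exactly $\mathrm{Cov}_p(r,q)\le 0$. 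Now $j\mapsto r_j$ is non-decreasing and $j\mapsto q_j$ is non-increasing: each follows from the obvious nesting coupling --- a uniform $(j+1)$-subset of a ground set contains a uniform $j$-subset, and membership in an up-closed family is preserved under passing to a superset. A one-dimensional negative-correlation inequality for oppositely monotone functions (Chebyshev's sum inequality, or equivalently $\mathrm{Cov}_p(r,q)=-\tfrac12\E_p\big[(r(J)-r(J'))(q(J)-q(J'))\big]\le 0$ for i.i.d.\ $J,J'\sim p$) then finishes the proof.

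I expect the only genuine subtlety to be in setting up this reduction: collapsing arbitrary non-decreasing $f,g$ to indicators of up-sets, and --- more essentially --- recognizing that conditioning on the single statistic $|H\cap I|$ decouples the two coordinate blocks, leaving as the entire residual dependence the negative correlation between ``$\mathcal U$ favors a large share of $H$ inside $I$'' and ``$\mathcal V$ favors a small share of $H$ inside $I$.'' (The reference \cite{negative} instead derives the claim from the fact that the permutation distribution of any fixed real vector is negatively associated: $(X_1,\dots,X_n)$ has the law of $(\mathbf 1[\pi(i)>n-h])_{i\in[n]}$ for a uniformly random permutation $\pi$ of $[n]$, which is a coordinate-wise non-decreasing function of the permutation distribution of $(1,\dots,n)$, and negative association is closed under non-decreasing functions of disjoint coordinate blocks; but proving the permutation-distribution case itself requires a careful induction, since a mixture of conditionally negatively associated distributions need not be negatively associated.)
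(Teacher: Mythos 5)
Your proof is correct, and it supplies genuinely more than the paper does: the paper states this claim with a citation to \cite{negative} and gives no argument, whereas you give a self-contained elementary proof. Both of your reductions are sound. The layer-cake decomposition of a bounded non-decreasing function on $\{0,1\}^I$ into a constant plus a non-negative combination of indicators of up-closed families, together with bilinearity of covariance, correctly reduces the definition of negative association to $\P[H\cap I\in\mathcal U,\ H\cap\bar I\in\mathcal V]\le\P[H\cap I\in\mathcal U]\,\P[H\cap\bar I\in\mathcal V]$ for up-closed $\mathcal U,\mathcal V$. The conditional-independence observation is the real insight and it is right: given $|H\cap I|=j$, the traces $H\cap I$ and $H\cap\bar I$ are independent and uniform over $j$-subsets of $I$ and $(h-j)$-subsets of $\bar I$, so all residual dependence lives in the scalar $j$; the nesting coupling gives $r_j$ non-decreasing and $q_j$ non-increasing, and Chebyshev's sum inequality finishes. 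This route avoids the machinery your parenthetical correctly attributes to \cite{negative} (negative association of permutation distributions plus closure under monotone functions of disjoint blocks), at the cost of being tailored to the two-block structure $I,\bar I$ — which is exactly what the definition requires, so nothing is lost. One small correction: the covariance identity should read $\mathrm{Cov}_p(r,q)=\tfrac12\E_p\big[(r(J)-r(J'))(q(J)-q(J'))\big]$ with a plus sign; since the integrand is pointwise non-positive for oppositely monotone $r,q$, the conclusion $\mathrm{Cov}_p(r,q)\le 0$ is unaffected.
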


The proof of Lemma~\ref{lemma:chernoff} is now immediate after observing that
$|S\cap H|$ can be written $|S\cap H| = \sum_{i\in S} X_i$ where $X_i$ is
defined as in Claim~\ref{claim:foo}. Since $\P[X_i = 1] = \frac{h}{n}$ we have
$\mu = \frac{|S|h}{n}$.

\section{Proofs for Section~\ref{sec:greedy}}

The full description of the greedy algorithm used in Theorem~\ref{thm:greedy}
can be found  in Algorithm~\ref{alg:modgreedy}.

\begin{algorithm}
    \caption{\textsc{ApproximateGreedy}}
    \label{alg:modgreedy}
    \algsetup{indent=2em}
    \begin{algorithmic}[1]
      \STATE \textbf{initialize} $S \leftarrow \emptyset$
      \WHILE{$|S| \leq k$}
        \STATE $S \leftarrow S \cup \argmax_{a \in N\setminus
        S}F(S\cup\{a\})$.
        \ENDWHILE
        \RETURN $S$
    \end{algorithmic}
\end{algorithm}

\begin{proof}[Proof of Proposition~\ref{prop:greedy}]
    Fix $\beta > 0$ and $\eps = \frac{1}{k^{1-\beta}}$. Let us consider an
    additive function $f$ where the ground set $N$ can be written $N=A\cup
    B\cup C$ with:
    \begin{itemize}
        \item $A$ is a set of $\frac{1}{2\eps}$ elements of value 2.
        \item $B$ is a set of $\frac{n}{2} - \frac{1}{4\eps}$ elements of value
            $\frac{1}{n}$.
        \item $C$ is a set of $\frac{n}{2} - \frac{1}{4\eps}$ elements of value 1.
    \end{itemize}
    
    We now define the following $\eps$-approximately submodular function $F$:
    \begin{displaymath}
        F(S) = \begin{cases}
            \frac{1}{\eps}&\text{if $S = A\cup\{c\}$ with $c\in C$}\\
            f(S)&\text{otherwise}
        \end{cases}
    \end{displaymath}
    $F$ is an $\eps$-approximately submodular function. Indeed, the only case
    where $F$ differs from $f$ is when $S = A\cup\{c\}$ with $c\in C$. In this
    case $F(S) = \frac{1}{\eps}\leq \frac{1}{\eps}+1 = f(S)$ and:
    \begin{displaymath}
        F(S) =\frac{1}{\eps} \geq (1-\eps)\left(\frac{1}{\eps}+1\right)
        = (1-\eps)f(S)
    \end{displaymath}
    When $\eps<\frac{1}{2}$, the greedy algorithm selects all elements from $A$
    and spends the remaining budget on $B$ and obtains a value of
    $\frac{1}{\eps} + \frac{1}{n}(k - k^{1-\beta}/2) = O(k^{1-\beta})$ when
    given $F$ as input.  However, it is clear that the optimal solution for $F$
    is to select all elements in $A$ and spend the remaining budget on $C$ for
    a value of $\frac{1}{\eps} + (k - k^{1-\beta}/2) = \Omega(k)$. The
    resulting approximation ratio is $O\left(\frac{1}{k^\beta}\right)$ which
    converges to zero as the budget constraint $k$ grows to infinity.
\end{proof}

Theorem~\ref{thm:matroid} uses a slight modification of
Algorithm~\ref{alg:modgreedy} to accommodate the matroid constraint. The full
description is given in Algorithm~\ref{alg:matroid}.

\begin{algorithm}
    \caption{\textsc{MatroidGreedy}}
    \label{alg:matroid}
    \algsetup{indent=2em}
    \begin{algorithmic}[1]
      \STATE \textbf{initialize} $S \leftarrow \emptyset$
      \WHILE{$N\neq\emptyset$}
        \STATE $x^* \gets \argmax_{x\in N} F(S\cup\{x\})$
        \IF{$S\cup\{x\}\in I$}
            \STATE $S\gets S\cup\{x\}$
        \ENDIF
        \STATE $N\gets N\setminus\{x^*\}$
        \ENDWHILE
        \RETURN $S$
    \end{algorithmic}
\end{algorithm}

\begin{proof}[Proof of Theorem~\ref{thm:matroid}]
    Let us consider $S^*\in\argmax_{S\in I} f(S)$. W.l.o.g. we can assume that
    $S^*$ is a basis of the matroid ($|S^*| = k$). It is clear that the set $S$
    returned by Algorithm~\ref{alg:matroid} is also a basis. By the basis
    exchange property of matroids, there exists $\phi:S^*\to S$ such that:
    \begin{displaymath}
        S-\phi(x) + x \in I,\quad  x\in S^*
    \end{displaymath}
    Let us write $S^* = \{e_1^*,\dots, e_k^*\}$ and $S = \{e_1,\dots, e_k\}$
    where $e_i = \phi(e_i^*)$ and define $S_i = \{e_1,\dots, e_i\}$ then:
    \begin{align*}
        f(S^*) &\leq f(S) + \sum_{i=1}^k f_S(e_i^*)
        \leq f(S) + \sum_{i=1}^k f_{S_{i-1}}(e_i^*)\\
        &\leq f(S)
        + \sum_{i=1}^k \left[\frac{1+\eps}{1-\eps}f(S_i) - f(S_{i-1})\right]\\
        &= f(S) + \sum_{i=1}^k \left[f(S_i) - f(S_{i-1})\right]
        + \frac{2\eps}{1-\eps} \sum_{i=1}^k f(S_i)\\
        &\leq 2f(S) + \frac{2k\eps}{1-\eps} f(S)
    \end{align*}
    where the first two inequalities used submodularity, the third used the
    definition of an $\eps$-erroneous oracle, and the fourth used monotonicity.
    The result then follows by applying the definition of $\eps$-approximate
    submodularity.
\end{proof}

\section{Proofs for Section~\ref{sec:curvature}}

The proof of Proposition~\ref{prop:curvature} follows from
Lemma~\ref{lemma:curvature} which shows how to construct an additive
approximation of $F$.

\begin{lemma}
    \label{lemma:curvature}
    Let $F$ be an $\eps$-approximately submodular function which approximates
    a submodular function $f$ with bounded curvature $c$. Let $F_a$ be the
    function defined by $ F_a(S) = \sum_{e\in S} F(e)$ then:
    \begin{displaymath}
        \frac{1-\eps}{1+\eps}F(S)\leq F_a(S)\leq
        \frac{1}{1-c}\frac{1+\eps}{1-\eps} F(S),\quad S\subseteq N
    \end{displaymath}
\end{lemma}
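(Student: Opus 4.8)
The plan is to sandwich both $F_a$ and $F$ between the submodular representative $f$ and the additive function induced by its singleton values, $f_a(S) \defeq \sum_{e\in S} f(e)$, and then chain the resulting inequalities. The whole argument is elementary; essentially the only thing to verify is a two-sided comparison between $f$ and $f_a$, after which everything reduces to bookkeeping with the $(1\pm\eps)$ factors.

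First I would establish $(1-c)\, f_a(S) \leq f(S) \leq f_a(S)$ for every $S\subseteq N$. The upper bound is subadditivity of the normalized submodular function $f$: writing $S=\{e_1,\dots,e_m\}$ and telescoping over singletons, each marginal $f_{S_{i-1}}(e_i)$ is at most $f(e_i)$, so $f(S)=\sum_{i=1}^m f_{S_{i-1}}(e_i)\leq \sum_{i=1}^m f(e_i)=f_a(S)$. For the lower bound I use the same chain $S_i=\{e_1,\dots,e_i\}$ together with the curvature consequence recalled before the proposition, namely $f_{S_{i-1}}(e_i)\geq (1-c)\,f(e_i)$; summing over $i$ gives $f(S)=\sum_{i=1}^m f_{S_{i-1}}(e_i)\geq (1-c)\sum_{i=1}^m f(e_i)=(1-c)\,f_a(S)$.

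Next I would apply the defining inequalities of $\eps$-approximate submodularity. Applied to each singleton $\{e\}$, \eqref{eq:def} gives $(1-\eps)f(e)\leq F(e)\leq (1+\eps)f(e)$; summing over $e\in S$ yields $(1-\eps)\,f_a(S)\leq F_a(S)\leq (1+\eps)\,f_a(S)$. Applied to $S$ itself, \eqref{eq:def} gives $\tfrac{1}{1+\eps}F(S)\leq f(S)\leq \tfrac{1}{1-\eps}F(S)$.

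Finally I combine these. For the upper bound: $F_a(S)\leq (1+\eps)f_a(S)\leq \tfrac{1+\eps}{1-c}f(S)\leq \tfrac{1}{1-c}\tfrac{1+\eps}{1-\eps}F(S)$. For the lower bound: $F_a(S)\geq (1-\eps)f_a(S)\geq (1-\eps)f(S)\geq \tfrac{1-\eps}{1+\eps}F(S)$, where the middle step uses $f_a(S)\geq f(S)$. There is no genuine obstacle; the one point that requires a moment of care is passing from the per-element curvature bound $f_S(a)\geq (1-c)f(a)$ to the global statement $f(S)\geq (1-c)f_a(S)$, which is done by the telescoping along a chain described above.
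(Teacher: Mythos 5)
Your proof is correct and follows essentially the same route as the paper's: approximate submodularity applied to singletons and to $S$, subadditivity for the lower bound, and the telescoping curvature bound $f_{S_{i-1}}(e_i)\geq(1-c)f(e_i)$ for the upper bound. Naming the intermediate additive function $f_a$ explicitly is a purely organizational difference.
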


\begin{proof}
    For the left-hand side:
    \begin{displaymath}
        F_a(S) = \sum_{e\in S} F(e) \geq (1-\eps)\sum_{e\in S} f(e)
          \geq (1-\eps) f(S)\geq \frac{1-\eps}{1+\eps} F(S)
    \end{displaymath}
    where the first and third inequalities used approximate submodularity and
    the second inequality used that submodular functions are subadditive.

    For the right-hand side, let us enumerate $S = \{e_1, \dots, e_\ell\}$ and
    write $S_i = \{e_1,\dots,e_i\}$ (with $S_0 = \emptyset$ by convention).
    Then:
    \begin{displaymath}
        F_a(S) = \sum_{i=1}^{\ell} F(e_i)
        \leq(1+\eps)\sum_{i=1}^{\ell} f(e_i)
        \leq \frac{1+\eps}{1-c} \sum_{i=1}^{\ell} f_{S_{i-1}}(e_i)
        = \frac{1+\eps}{1-c} f(S)
        \leq \frac{1}{1-c}\frac{1+\eps}{1-\eps}F(S)
    \end{displaymath}
    where the first and last inequalities used approximate submodularity, and
    the second inequality used the curvature assumption.
\end{proof}

\begin{proof}[Proof of Proposition~\ref{prop:curvature}]
    Let us denote by $S_a$ a solution to $\max_{|S|\leq k} F_a(S)$ where $F_a$
    is defined as in Lemma~\ref{lemma:curvature}. Since $F_a$ is an additive
    function, $S_a$ can be found by querying the value query oracle for $F$ at
    each singleton and selecting the top $k$. The approximation ratio then
    follows directly from Lemma~\ref{lemma:curvature}.
\end{proof}

\end{document}